\DeclareMathOperator*{\argmax}{arg\,max}
\definecolor{mygreen}{RGB}{28,172,0} 
\definecolor{mylilas}{RGB}{170,55,241}
\def\BState{\State\hskip-\ALG@thistlm}
\newcommand{\tr}{\operatorname{Tr}}
\newcommand{\diag}{\operatorname{diag}}
\newcommand{\1}{\mathds{1}}
\newcommand{\Var}{\operatorname{Var}}
\newcommand{\E}{\mathbb{E}}
\newcommand{\tp}{\mathsf{T}}
\newcommand{\N}{\mathbb{N}}
\newcommand{\R}{\mathbb{R}}
\newcommand{\IF}{\text{if }}
\newcommand{\st}{\text{subject to}}
\newtheorem{theorem}{Theorem}
\newtheorem{lemma}{Lemma}
\newtheorem{corollary}{Corollary}
\newtheorem{proposition}{Proposition}
\newtheorem{example}{Example}
\newtheorem{definition}{Definition}
\newtheorem{remark}{Remark}
\newtheorem{assumption}{Assumption}
\DeclarePairedDelimiter\floor{\lfloor}{\rfloor}
\title{\LARGE \bf
On the Price of Transparency: A Comparison between Overt Persuasion and Covert Signaling
}
\author{Tao Li and Quanyan Zhu
\thanks{Authors are with the Department of Electrical and Computer Engineering, New York University, NY, 11201, USA {\tt tl2636, qz494@nyu.edu}}
}
\begin{document}

\maketitle
\thispagestyle{empty}
\pagestyle{empty}

\begin{abstract}
Transparency of information disclosure has always been considered an instrumental component of effective governance, accountability, and ethical behavior in any organization or system. However, a natural question follows: \emph{what is the cost or benefit of being transparent}, as one may suspect that transparency imposes additional constraints on the information structure, decreasing the maneuverability of the information provider. This work proposes and quantitatively investigates the \emph{price of transparency} (PoT) in strategic information disclosure by comparing the perfect Bayesian equilibrium payoffs under two representative information structures: overt persuasion and covert signaling models. PoT is defined as the ratio between the payoff outcomes in covert and overt interactions. As the main contribution, this work develops a two-stage-bilinear (TSB) programming approach to solve for non-degenerate perfect Bayesian equilibria of dynamic incomplete information games with finite states and actions. Using TSB, we show that it is always in the information provider's interest to choose the transparent information structure, as $0\leq \textrm{PoT}\leq 1$. The upper bound is attainable for any strictly Bayesian-posterior competitive games, of which zero-sum games are a particular case. For continuous games, the PoT, still upper-bounded by $1$, can be arbitrarily close to $0$, indicating the tightness of the lower bound. This tight lower bound suggests that the lack of transparency can result in significant loss for the provider. We corroborate our findings using quadratic games and numerical examples.        
\end{abstract}

\section{Introduction}
Information asymmetry is a prevailing phenomenon arising in a variety of contexts, such as in financial markets where insiders have more information than outsiders \cite{hayne77info-asy-finn}, in healthcare where doctors have more information about medical conditions than patients \cite{fabes22info-asy-health}, or in cyber deception where the defender has a better grasp of the enterprise network than the attacker \cite{jeff19deception-survey}. 

The double-edged nature of the imbalance of power caused by asymmetric information is noteworthy. On the one hand, it can foster the development of deception-based defense mechanisms that benefit the cybersecurity realm \cite{jeff19deception-survey}. On the other hand, it can also result in inefficiencies and exploitation in financial operations \cite{hayne77info-asy-finn}. Whether positive or negative, the imbalance of power is considered the root cause of unethical practices, even if the intention is benign (see a recent discussion on the ethics of cyber deception \cite{zhu23dce}).  

Addressing this ethical concern can be achieved by reducing information asymmetry through increased transparency. This can be accomplished by, for example, mandating companies to disclose more information to the public or by healthcare providers communicating medical information more clearly to patients. However, a natural question arises: \emph{what is the price of being transparent} the information provider (the sender) has to pay, as transparency requirements may impose additional constraints on the sender's side? Does increased transparency lead to decreased maneuverability for the sender, thereby impairing the effectiveness of systems built on information asymmetry, such as cyber deception in security applications? Does one have to choose between ethical standards and operational effectiveness?   

\begin{figure}
    \centering
    \includegraphics[width=0.4\textwidth]{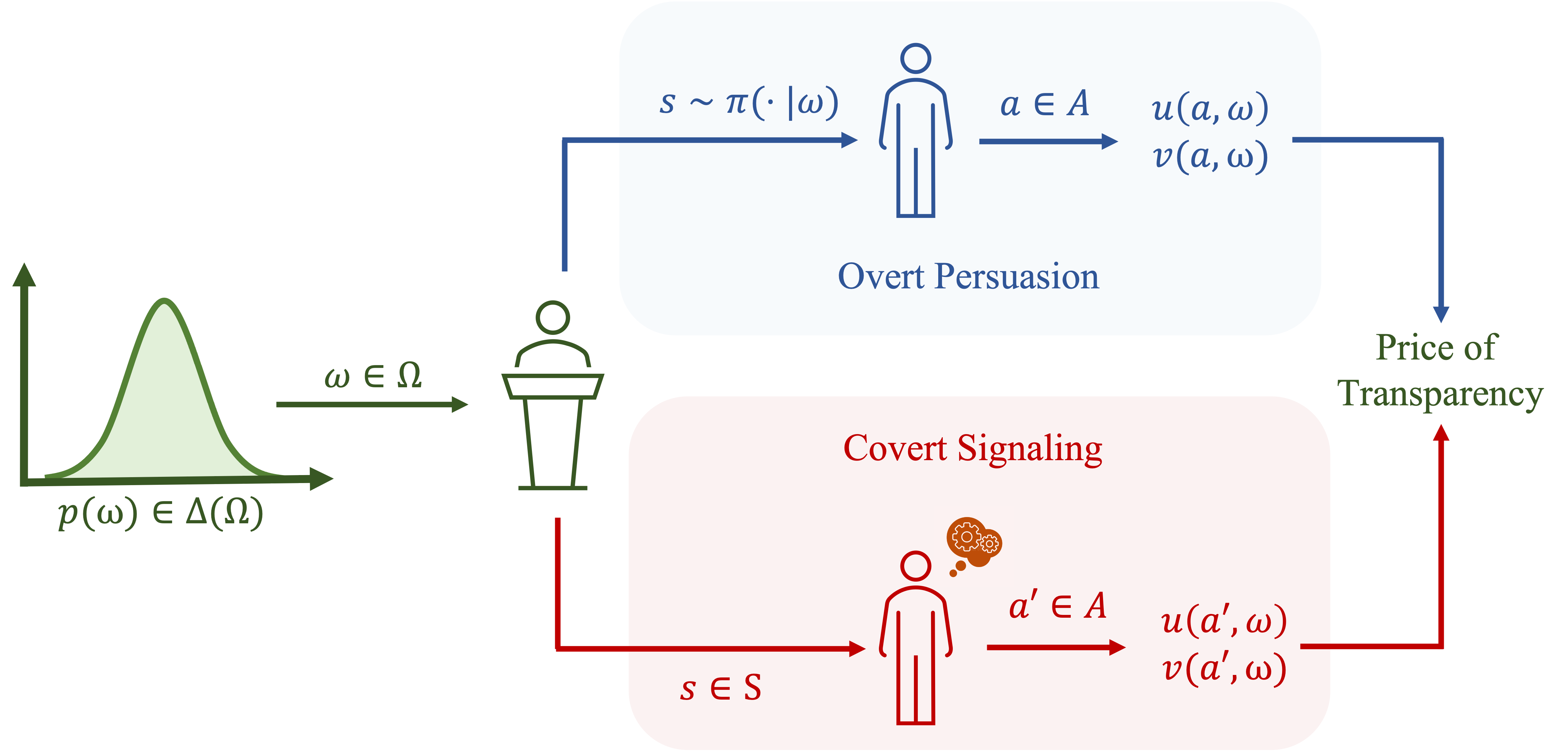}
    \caption{A schematic illustration of two juxtaposed communication games with distinct information structures. The state variable $\omega$ is randomly generated from the prior distribution $p$ and privately revealed to the sender. The receiver must infer the current state using the signal $s$ from the sender and then best respond to its belief. The payoff to the sender (the receiver), denoted by $u(a,\omega) [v(a,\omega)]$, is jointly determined by the state $\omega$ and the receiver's action $a$. The key difference between the two information structures is whether the signaling mechanism $\pi$ is made public or not. }
    \label{fig:two-info}\vspace{-6mm}
\end{figure}

As information asymmetry is prevalent in security applications \cite{tao23ztd} and other real-world systems \cite{tao22csm}, investigating the price of transparency (PoT) is imperative. This work initiates a quantitative study on PoT in strategic information disclosure. We consider a communication game between the sender and the receiver, where the state of nature is privately revealed to the sender only. Possessing this informational advantage, the sender discloses partial information (signal) regarding the state to the receiver to manipulate its belief, leading the receiver to take actions favored by the sender. The information disclosure mechanism (i.e., how the sender creates the signal) is referred to as the information structure in the literature \cite{tao_info}. To answer the questions above on the transparency of information disclosure, we compare two information structures: 1) overt persuasion (OP), where the sender publicly announces its mechanism, creating a transparent information disclosure, and 2) covert signaling (CS), where the mechanism is kept private throughout the gameplay, and the receiver only observes the signal. The two information structures are summarized in \Cref{fig:two-info}.

PoT is defined as the ratio of the sender's equilibrium payoff under covert signaling over its counterpart under overt persuasion to quantify the price of choosing the transparent information structure. Note that the equilibrium concept considered here is the perfect Bayesian equilibrium (PBE), as strategic information disclosure studied in this paper is a dynamic game of incomplete information, and players are assumed sequentially rational \cite{fudenberg}.      

\noindent\textbf{Main Results:}  As the transparency requirement mandates the sender to reveal its intention on signaling, it seems to give the receiver an upper hand. However, as opposed to the first impression, the key finding is that \text{PoT} $\leq1$ for any communication games, indicating that opting for the transparent information structure (OP) does not degrade the sender's payoff. On the contrary, the opaque one (CS) creates ``friction'' during the information transmission: the receiver needs to conjecture the sender's mechanism first, and then the conjecture must satisfy the consistency requirement in PBE. Consequently, covert signaling imposes more constraints on players' admissible strategies than overt persuasion. In comparison, transparent information disclosure leads to efficient communication, as players need not consider consistency.  

\noindent\textbf{Contributions:} Our main contributions include 1) the development of a two-stage-bilinear (TSB) programming approach (\Cref{thm:bb}) for solving non-degenerate PBE in strategic information disclosure; 2) the identification of a special class of communication games, termed strictly Bayesian-posterior games, for which the upper bound is attained: $\text{PoT}=1$, (\Cref{thm:upp}) ; 3) the construction of a family of quadratic games for which PoT can be arbitrarily close to $0$, indicating the tightness of the lower bound (\Cref{thm:lower}). 

To the best of our knowledge, this work is among the first endeavors to characterize PBE using bilinear programming. Solving for PBE is challenging, as the belief consistency in PBE involves Hadamard division. The resulting problem is highly nonlinear. By transferring the problem into the posterior-belief space, the proposed $Z$-programming creates a viable approach for computing PBE, a prevalent solution concept in dynamic games of incomplete information.  This work can shed light on many related social and engineering problems with asymmetric information.       

\noindent\textbf{Related Works:} This work stands at the intersection of two lines of research: strategic information transmission and algorithmic information design. The two information structures are inherited from the Bayesian persuasion model in \cite{kamenica11BP} and the signaling model in \cite{crawford82signaling}, respectively. Starting from these seminal models in strategic information transmission, this work carries out a comparative study of the two information structures. Unlike early comparative studies \cite{gossner00compare,green07compare} focusing on Bayesian Nash equilibrium, this work treats perfect Bayesian equilibrium, a more challenging concept involving belief consistency.  

This work also subscribes to the recent line of works that explores the computational aspect of information structure design \cite{rubinstein15hardness,bhaskar16hardness,dughmi19hardness,tao22bp}. These mentioned works provide hardness results on the computational complexity of solving for the equilibrium information structures without showing concrete algorithms. In contrast, not concerning the existence of PBE or the associated complexity, we present a bilinear programming approach to compute PBE, which in turn corroborates these hardness results in \cite{rubinstein15hardness,bhaskar16hardness,dughmi19hardness,tao22bp}.

\section{Strategic Information Disclosure: Persuasion and Signaling}\label{sec:model}
Consider a communication game as in \cite{crawford82signaling}, where the better-informed sender, upon receiving the state of nature, sends a signal to the receiver who then takes an action that determines payoffs to both players. Mathematically, the game model is given by a tuple $\left\langle \Omega, \mathcal{S}, \mathcal{A},v, u\right\rangle$, where 1) $\Omega$ is the set of possible states with its typical element denoted by $\omega$, and the realization of $\omega$ is only revealed to the sender; 2) $p\in\Delta(\Omega)$ denotes the prior distribution over the state space; 3) $\mathcal{S}$ is the set of signals possessed by the sender with its typical element denoted by $s$; 4) $\mathcal{A}$ is the action space of the receiver; 5) $u, v: \Omega\times\mathcal{A}\rightarrow \R $ are utilities\footnote{Unless otherwise specified, $u,v\geq 0$.} of the sender and the receiver, respectively.  $\Delta(\cdot)$ denotes the set of all probability measures compatible with the underlying $\sigma$-algebra (e.g., Borel) over the set of interest.

\noindent\textbf{Information Structures:} The information structure concerns how the sender signals to  the receiver. An information structure or signaling mechanism is defined by a mapping $\pi: \Omega\rightarrow\Delta(\mathcal{S})$, i.e., $\pi(\cdot|\omega)$ is a probability distribution over the signal space. As detailed below, the difference between overt persuasion and covert signaling is whether $\pi$ is revealed to the receiver before it acts.

\noindent\textbf{Covert Signaling (CS):} As shown in \Cref{fig:two-info}, the information structure $\pi$ is unknown to the receiver who consequently cannot  form a posterior belief $\lambda(\cdot|s)\in \Delta(\Omega)$, as the Bayes update requires the knowledge of the information structure $\pi$: $\lambda(\omega|s)=\frac{\pi(s|\omega)p(\omega)}{\int_{\Omega}\pi(s|\omega')p(d\omega')}$. In this case, the receiver can begin with a conjectural belief system, and best respond to these beliefs. If the belief system is consistent with the receiver's and the sender's strategies in the Bayesian sense, then the belief system and players' strategies constitute a perfect Bayesian equilibrium.
\begin{definition}[Perfect Bayesian Equilibrium \cite{crawford82signaling}]\label{def:pbe}
For a communication game, a triple of the sender's information structure $\pi$, the receiver's strategy $\alpha: \mathcal{S}\rightarrow \Delta(\mathcal{A})$, and a belief system $\lambda: \mathcal{S}\rightarrow\Delta(\Omega)$ is a perfect Bayesian equilibrium if it satisfies 1) among all admissible information structures, $\pi$ maximizes the sender's expected payoff given $\alpha$, see \eqref{eq:sender-br}; 2) for any signal $s$, $\alpha(s)$ maximizes the receiver's expected payoff under the belief system $\lambda$, see \eqref{eq:receiver-br}; 3) the belief system  is consistent with $\pi$ and $\alpha$, see \eqref{eq:consistency}.
\begin{align}
&\pi \in \argmax_{\mu: \Omega\rightarrow \Delta(\mathcal{S})} \int_{\Omega} \int_{\mathcal{S}}\int_{\mathcal{A}} u(a,\omega)\alpha(da|s) \mu(ds|\omega) p(d\omega), \label{eq:sender-br}\\
& \alpha(\cdot|s) \in \argmax_{\mu \in \Delta(\mathcal{A})} \int_{\mathcal{A}} v(a,\omega)\mu(da) \lambda(d\omega|s), \label{eq:receiver-br}\\
& \lambda(\omega|s)=\left\{\begin{array}{ll}
     \frac{\pi(s|\omega)p(\omega)}{\int_{\Omega}\pi(s|\omega')p(d\omega')}, \text{ if } \int_{\Omega}\pi(s|\omega')p(d\omega')>0  \\
      \text{any probability distribution on }\Omega, \text{ otherwise}.
\end{array}\right. \label{eq:consistency}
\end{align}
\end{definition}
Let $\gamma(s)=\int_{\Omega}\pi(s|\omega')p(d\omega')$ denote the probability of generating a particular signal $s$, and we refer to $\operatorname{supp}(\gamma)$ as the set of realizable signals. The consistency  \eqref{eq:consistency} in PBE requires that for any realizable signal, the belief system conforms to the Bayes rule under equilibrium strategies $\pi$ and $\alpha$. For those unrealizable (the receiver never observes these signals), the beliefs can be arbitrary ones, as they never appear on the equilibrium path \cite[Chapter 8.2]{fudenberg}, which makes no difference to the equilibrium strategies. Yet, these arbitrary beliefs may cause trouble in our bilinear programming formulation in finite games presented in \Cref{sec:pot}. Hence, we restrict the focus to a subset of PBE, where every signal is realizable.
 \begin{definition}[Non-degenerate PBE]
\label{def:non-pbe}
For a triple of $\pi$, $\alpha$, and $\lambda$, if it satisfies \eqref{eq:sender-br}, \eqref{eq:receiver-br}, \eqref{eq:consistency}, and $\operatorname{supp}(\gamma)=\mathcal{S}$, then it is a non-degenerate PBE: every signal gives rise to an equilibrium path. 
 \end{definition}

\noindent\textbf{Overt Persuasion (OP):} Unlike covert signaling, the sender in overt persuasion first reveals $\pi$ to the receiver and then draws a signal according to $\pi(\cdot|\omega)$, when the realized state is $\omega$. Hence, the receiver need not conjecture, as the belief $\lambda(\omega|s)$ is readily available through Bayesian update once the signal is observed. The sender's equilibrium payoff is determined by backward induction. The equilibrium information structure is given by\footnote{To simplify the exposition, we only consider the non-degenerate case. For degenerate cases, take $\lambda$ as an arbitrary distribution.}
\begin{equation}
    \begin{aligned}
    \max_{\pi:\Omega\rightarrow \Delta(\mathcal{S})} & \int_{\Omega} \int_{\mathcal{S}}\int_{\mathcal{A}} u(a,\omega)\alpha(da|s) \pi(ds|\omega) p(d\omega)\\
    \text{subejct to } & \alpha(\cdot|s) \in \argmax_{\mu \in \Delta(\mathcal{A})} \int_{\mathcal{A}} v(a,\omega)\mu(da) \lambda(d\omega|s),\\
    & \lambda(\omega|s)=\frac{\pi(s|\omega)p(\omega)}{\int_{\Omega}\pi(s|\omega')p(d\omega')}.
\end{aligned}\label{eq:op}
\end{equation}

\begin{remark}
Note that due to the transparent information structure in OP, there is no consistency requirement as in \eqref{eq:consistency}, and only two perfection conditions are left: \eqref{eq:sender-br} and \eqref{eq:receiver-br}. Hence, the perfect equilibrium becomes a sender-preferred subgame perfect equilibrium (SPE) \cite{kamenica11BP}, which can be solved using backward induction. In contrast, PBE carries a circularity: the beliefs are consistent with the strategies, which are optimal given the beliefs. Even if players move sequentially in CS, PBE cannot be determined by backward induction. 
\end{remark}

\noindent\textbf{Bayesian Plausibility and Backward Induction:}
Since \eqref{eq:op} is a bilevel optimization of functionals, directly solving for $\pi$ remains challenging. A key observation in \cite{kamenica11BP} is that an information structure is equivalent to a distribution over posterior beliefs. Recall that  each signal $s$ in OP leads to a posterior belief $\lambda(s)\in \Delta(\Omega)$ with respect to the information structure $\pi$. Accordingly, each information structure $\pi$ leads to a distribution over posterior beliefs. Denote a distribution of posteriors by $\tau\in \Delta(\Delta(\Omega))$, and $\tau$ is given by $\tau(\lambda )=\int_{s:\lambda=\lambda(\cdot|s)}\int_{\Omega}\pi(s|\omega)p(d\omega)ds$, assuming that $\{s:\lambda=\lambda(\cdot|s)\}$ is a measurable subset of $\mathcal{S}$, and $ds$ denotes the Borel measure. This work considers the cases where $\Omega$ is a separable metric space (e.g., $\R^n$ or finite sets). Consequently, $\Delta(\Omega)$, endowed with weak$^*$-topology, is also separable and metrizable. Hence, the Borel probability measure is well-defined on $\Delta(\Omega)$.    

With a slight abuse of notation, we also denote by $\lambda$ an individual belief in $\Delta(\Omega)$. A belief is Bayesian inducible under $\pi$ if $\tau(\lambda )>0$, i.e., $\lambda\in \operatorname{supp}(\tau)$, and distribution of posteriors $\tau$ is \emph{Bayesian plausible} if the expected posterior probability equals the prior: $\int_{\operatorname{supp}(\tau)}\lambda \tau(d\lambda )=p$. \cite{kamenica11BP} finds that for any Bayesian plausible distribution $\tau$, one can always find an information structure $\pi$ such that every $\lambda\in \operatorname{supp}(\tau)$ is $\pi$-Bayesian inducible. With this observation, searching for the optimal information structure is equivalent to finding the optimal posteriors distribution through backward induction specified below.

Given a posterior belief $\lambda$, denote by $\hat{a}(\lambda )=\argmax_{a}\E_{\omega\sim \lambda}v(a,\omega)$ the best response of the receiver, which is assumed to be a singleton (tie breaks in favor of the sender)\footnote{If the belief $\lambda$ is induced by the signal $s$, then $\alpha(s)=\hat{a}(\lambda)$}. Under this belief, the sender's expected utility is $\hat{u}(\lambda )= \E_\lambda u(\hat{a}(\lambda ),\omega)$. If $\lambda $ is further subject to a distribution $\tau$, then the sender's payoff is $\E_\tau \hat{u}(\lambda )$. Since the sender's goal is to find the distribution $\tau$ that maximizes his expected utility, the corresponding optimization problem is given by 
\begin{equation}
	\begin{aligned}
	\max_\tau \quad & \E_\tau \hat{u}(\lambda )\\
	\st & \int_{\lambda\in \operatorname{supp}(\tau)}\lambda \tau(d\lambda )=p.
\end{aligned} \label{eq:spe}
\end{equation}
\begin{remark}
Bayesian plausibility transfers the problem to the space of posterior beliefs, paving the way for further analysis of \eqref{eq:spe}. For example, one can prove that $\hat{u}(\lambda)$ is upper semicontinuous \cite{kamenica11BP}. Several hardness results in computing or learning the information structure \cite{dughmi19hardness,tao22bp} rest on this plausibility. As later shown in \Cref{sec:z}, Bayesian plausibility also plays a significant role in our programming.  
\end{remark}
\textbf{The Price of Transparency} 
\begin{definition}[Price of Transparency]
    Denote by $U^{CS}$ and $U^{OP}$ the sender's equilibrium payoff in covert signaling and overt persuasion, respectively. The price of transparency (PoT) is defined as $\text{PoT}=\frac{U^{CS}}{U^{OP}}$.
\end{definition}
 Note that $U^{OP}$, the optimal value in \eqref{eq:op} [or equivalently \eqref{eq:spe}], is unique. In contrast, the communication game in CS may admit multiple equilibria and hence, different equilibrium payoffs. To understand this subtle difference between OP and CS, one can think of CS as a Nash play, as the best response conditions \eqref{eq:sender-br} and \eqref{eq:receiver-br} hold simultaneously. Thanks to the sender's commitment to a transparent information structure, the SPE in \eqref{eq:op} is in the same vein as Stackelberg games. These observations are more evident from finite games discussed in \Cref{sec:pot}. Given that PoT is not a definite number but a collection of possibilities, we aim to identify its upper and lower bounds for the rest of the paper.        
\section{The Price of Transparency in Finite Games}\label{sec:pot}
\textbf{Matrix Representation of Information Structure} Our treatment of PoT begins with finite games where $\Omega, \mathcal{S}$ and $\mathcal{A}$ are all finite discrete sets. In finite games, the sender's and the receiver's strategies and the belief system all take matrix forms. We introduce some notations in the following to facilitate the discussion. Let $\Omega=\{\omega_i\}_{i\in [M]}, \mathcal{S}=\{s_i\}_{i\in [N]}$, and $\mathcal{A}=\{a_i\}_{i\in [K]}$, where $[n]:=\{1,2,\ldots,n\}, n\in \N_{+}$. Assume that $N\geq M$. Denote by $p\in \R^{M}$ the prior distribution over $\Omega$, and by $U=[U_{km}=u(a_k,\omega_m)]\in \R^{K\times M}, V=[V_{km}=v(a_k,\omega_m)]\in \R^{K\times M}$ the sender's and the receiver's utilities, respectively. The sender's information structure is specified by a right stochastic matrix $\Pi=[\Pi_{mn}=\pi(s_n|\omega_m)]\in \R^{M\times N}$. The receiver's strategy is given by a right stochastic matrix $A=[A_{nk}=\alpha(a_k|s_n)]\in \R^{N\times K}$. Denote by $\mathds{1}$ the all-one vector of a proper dimension depending on the context, and then $\Pi\1=\1, A\1=\1$. 

\noindent\textbf{Notations:} In addition to the above, other helpful notations are as follows. $e_i$ refers to the $i$-th elementary vector of a proper dimension depending on the context. $J_{K\times M}$ denotes the $K\times M$ matrix of 1's. For a vector $w$, $\diag(w)$ denotes the diagonal matrix with $w$ on its diagonal. For a square matrix $W$, $\diag(W)$ denotes the vector containing its diagonal entries.  For any two vectors $w,v\in \R^N$ of the same dimension, $w\succeq v$ (or $w\succ v$) indicates entry-wise relations: $w_i\geq v_i, \forall i\in [N]$. $\oslash$ denotes the Hadamard division (entry-wise): $w\oslash v=[w_i/v_i]_{i\in [N]}$. $\tr(W)$ denotes the trace of a square matrix $W$.  $W_j$ refers to the $j$-th column, and its transpose of $W$ is denoted by $W^\tp$, while  $W'$ denotes its perturbation within the same domain specified by the context. 

Define the prior matrix as $P=\diag(p)\in \R^{M\times M}$. Given the players' strategies $\Pi$ and $A$, the sender's expected payoff is $\sum_{m}p_m\sum_{n}\Pi_{mn}\sum_{k}A_{nk}U_{km}=\tr(P\Pi A U)$. Under the information structure $\Pi$, the receiver's posterior belief upon observing signal $s_n$ is $\lambda_{mn}=\frac{p_m \Pi_{mn}}{\sum_{m'} p_{m'}\Pi_{m'n}}$. Define the belief system as $\Lambda=[\lambda_{mn}]\in \R^{M\times N}$. According to the Bayes rule shown above, the information structure and the belief system satisfies $\Lambda=P\Pi \oslash ( \mathds{1}\mathds{1}^\tp P\Pi)$. The receiver's strategy $A$ is a best response to the posterior belief $\Lambda$, i.e., $\sum_{m}\lambda_{mn}\sum_{k}A_{nk}V_{km}\geq \sum_{m}\lambda_{mn}\sum_{k}A'_{nk}V_{km}$, for any $n\in [N]$, and any right stochastic matrix $A'$.  Summing up all the equations above, we arrive at the following statement.
\begin{proposition}[PBE in Matrix Form]\label{prop:pbe-matrix}
For a finite communication game, a triple of matrices $(\Pi, A, \Lambda)$ is a perfect Bayesian equilibrium if it satisfies  
\begin{align}
    &\operatorname{Tr}(P\Pi A U)\geq \operatorname{Tr}(P \Pi' A U), \forall \Pi'\in \R_{\geq 0}^{M\times N}, \Pi'\mathds{1}=\mathds{1},\label{eq:se-br}\\
    & \operatorname{diag}(AV\Lambda)\succeq \operatorname{diag}(A'V\Lambda ),\forall A'\in \R_{\geq 0}^{N\times K}, A'\mathds{1}=\mathds{1},\label{eq:re-br}\\
    & \Lambda=P\Pi \oslash ( \mathds{1}\mathds{1}^\tp P\Pi),\label{eq:lambda}\\
    & \Pi\mathds{1}=\mathds{1}, A\mathds{1}=\mathds{1}, \Pi\in \R_{\geq 0}^{M\times N}, A\in \R_{\geq 0}^{N\times K}. \nonumber
\end{align}
\end{proposition}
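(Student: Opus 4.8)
The plan is to show that the four matrix conditions in the proposition are nothing more than a faithful, line-by-line transcription of the three requirements in Definition~\ref{def:pbe}, once the game is restricted to finite $\Omega$, $\mathcal{S}$, $\mathcal{A}$ and the integrals collapse to finite sums. I would proceed by verifying each of the three PBE conditions separately and checking that the stochasticity constraints on $\Pi$ and $A$ are carried over correctly. The guiding identity throughout is the payoff rewriting already established in the text, namely that for strategies $\Pi$ and $A$ the sender's expected payoff $\sum_m p_m \sum_n \Pi_{mn}\sum_k A_{nk}U_{km}$ equals $\operatorname{Tr}(P\Pi A U)$; I would first confirm this identity by matching indices, since it is the workhorse for both best-response inequalities.

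First I would treat the sender's best response \eqref{eq:sender-br}. Fixing the receiver's strategy $A$ (which in finite form plays the role of $\alpha$), the sender maximizes his expected payoff over all admissible information structures $\Pi'$. Using the trace identity, the objective at $\Pi'$ is $\operatorname{Tr}(P\Pi' A U)$, and optimality of $\Pi$ is exactly \eqref{eq:se-br}, where the feasible set is the set of right stochastic matrices $\Pi'\in\R_{\geq 0}^{M\times N}$ with $\Pi'\1=\1$. Second, for the receiver's best response \eqref{eq:receiver-br}, I would note that condition \eqref{eq:receiver-br} holds signal-by-signal: for each $s_n$ the receiver maximizes $\sum_m \lambda_{mn}\sum_k A_{nk}V_{km}$ over the $n$-th row of $A$. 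The quantity $\sum_m \lambda_{mn}\sum_k A_{nk}V_{km}$ is precisely the $n$-th diagonal entry of $AV\Lambda$, so collecting the per-signal optimality conditions into a single vector inequality yields $\operatorname{diag}(AV\Lambda)\succeq \operatorname{diag}(A'V\Lambda)$ for every competing stochastic $A'$, which is \eqref{eq:re-br}. Third, the consistency requirement \eqref{eq:consistency} in the non-degenerate case, where $\gamma(s_n)=\sum_{m'} p_{m'}\Pi_{m'n}>0$ for every $n$, reads $\lambda_{mn} = p_m\Pi_{mn}/\sum_{m'}p_{m'}\Pi_{m'n}$; writing the numerators as the matrix $P\Pi$ and the common denominators (replicated down each column) as $\1\1^\tp P\Pi$, this is exactly the Hadamard-division identity \eqref{eq:lambda}.

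The one step that warrants care rather than genuine difficulty is the diagonal-entry bookkeeping in \eqref{eq:re-br}: I must verify that the receiver's objective under belief column $\Lambda_n$ really surfaces as the $n$-th diagonal entry of the triple product $AV\Lambda$ and not some off-diagonal cross term, since $A$ couples row $n$ to signal $n$ while $V$ and $\Lambda$ carry the state indices. Expanding $(AV\Lambda)_{nn}=\sum_k\sum_m A_{nk}V_{km}\lambda_{mn}$ confirms the match, and the off-diagonal entries $(AV\Lambda)_{n n'}$ with $n\neq n'$ are irrelevant because each signal's best response is evaluated against its own induced belief. I would also remark that \eqref{eq:lambda} is well-posed precisely under the non-degeneracy assumption of Definition~\ref{def:non-pbe}, which guarantees no zero denominators, so the Hadamard division is everywhere defined. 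With these three correspondences established and the stochasticity constraints transcribed verbatim, summing up the conditions gives the matrix characterization claimed, completing the argument.
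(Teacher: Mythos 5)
Your proposal is correct and follows essentially the same route as the paper: the paper establishes the trace identity $\sum_m p_m\sum_n \Pi_{mn}\sum_k A_{nk}U_{km}=\operatorname{Tr}(P\Pi A U)$, writes the Bayes rule as $\Lambda=P\Pi\oslash(\1\1^\tp P\Pi)$, states the per-signal receiver optimality, and then obtains the proposition by ``summing up'' these transcriptions, exactly as you do. Your added checks --- that the receiver's objective is the $n$-th diagonal entry of $AV\Lambda$ with off-diagonal terms irrelevant, and that the Hadamard division is well-posed only under non-degeneracy --- are sound and consistent with the paper's surrounding discussion.
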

Several remarks are in order. First, it is straightforward to see the one-to-one correspondence between equations in \Cref{prop:pbe-matrix} and those in \Cref{def:pbe}. \eqref{eq:se-br} corresponds to \eqref{eq:sender-br}, ensuring $\Pi$ is the best response to $A$. Similar to \eqref{eq:receiver-br}, \eqref{eq:re-br} asserts the optimality of $A$ given $\Lambda$ that is consistent with $\Pi$, as enforced by \eqref{eq:lambda}. Second, \Cref{prop:pbe-matrix} clearly demonstrates that solving for PBE is challenging, as Hadamard division in the belief system makes the problem highly nonlinear. Fortunately, this nonlinearity created by Hadamard division can be bypassed using Bayesian plausibility for non-degenerate PBE, as shown later in \Cref{sec:z}. 

Finally, we conclude this section with the matrix representation of SPE in \eqref{eq:op}, from which we draw an analogy between SPE and Stackelberg equilibrium. 
\begin{proposition}[SPE in Matrix Form]\label{prop:spe-matrix}
    For a finite communication game, a pair of matrices $(\Pi, A)$ is a sender-preferred subgame perfect equilibrium  if it satisfies 
    \begin{equation}
    \begin{aligned}
        \max_{\Pi, A}\quad & \operatorname{Tr}(P\Pi A U) \\
 	\st \quad & \operatorname{diag}(AV\Lambda)\succeq \operatorname{diag}(A'V\Lambda ),\\
  &\forall A'\in \R_{\geq 0}^{N\times K}, A'\mathds{1}=\mathds{1},\\
 	& \Lambda^\tp=P\Pi \oslash ( \mathds{1}\mathds{1}^\tp P\Pi),\\
 	& \Pi\mathds{1}=\mathds{1}, A\mathds{1}=\mathds{1}, \Pi\in \R_{\geq 0}^{M\times N}, A\in \R_{\geq 0}^{N\times K}.
    \end{aligned}\label{eq:spe-matrix}
    \end{equation}
\end{proposition}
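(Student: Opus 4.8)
The plan is to establish \Cref{prop:spe-matrix} by the same term-by-term translation that produced \Cref{prop:pbe-matrix}, while carefully accounting for the structural difference between commitment (OP) and simultaneous play (CS). First I would reuse the computation preceding \Cref{prop:pbe-matrix}: for finite $\Omega,\mathcal{S},\mathcal{A}$ the sender's expected payoff $\int_\Omega\int_\mathcal{S}\int_\mathcal{A} u(a,\omega)\alpha(da|s)\pi(ds|\omega)p(d\omega)$ evaluates to $\tr(P\Pi A U)$, which is exactly the objective of \eqref{eq:spe-matrix}. The Bayesian-consistency requirement in \eqref{eq:op} becomes the column-wise normalization $\Lambda = P\Pi \oslash(\1\1^\tp P\Pi)$, since the $n$-th column of $P\Pi$ collects the joint probabilities $p_m\Pi_{mn}$ while the $n$-th entry of $\1^\tp P\Pi$ is the signal marginal $\gamma(s_n)$. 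This disposes of the objective and the belief constraint with no new work.

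Next I would translate the receiver's best-response condition \eqref{eq:receiver-br}. For a fixed signal $s_n$ whose posterior is the $n$-th column of $\Lambda$, the receiver's expected utility under the strategy row $A_{n\cdot}$ equals $\sum_m \lambda_{mn}\sum_k A_{nk}V_{km}$, which is precisely the $n$-th diagonal entry of $AV\Lambda$. Optimality for every signal is therefore the entry-wise inequality $\diag(AV\Lambda)\succeq\diag(A'V\Lambda)$ over all right-stochastic $A'$, matching the constraint in \eqref{eq:spe-matrix}. At this point the feasible set of \eqref{eq:spe-matrix} is seen to be the exact finite-dimensional rendering of the feasibility conditions of \eqref{eq:op}.

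The crux, and the step I expect to be the main obstacle, is justifying why the sender-optimality condition \eqref{eq:se-br} appearing in the PBE system is here replaced by a \emph{joint} maximization over the pair $(\Pi,A)$ rather than an inner-outer bilevel program. The point is the commitment structure noted in the Remark: in OP the sender moves first and the receiver best-responds, so by backward induction the equilibrium is the optimistic (sender-preferred) solution of the Stackelberg problem ``maximize $\tr(P\Pi A U)$ over $\Pi$, anticipating a receiver best response $A$.'' I would invoke the standard equivalence between the optimistic bilevel formulation and the single-level joint maximization: because ties break in favor of the sender, whenever several receiver best responses coexist the sender may select the most favorable one, and maximizing over $(\Pi,A)$ jointly subject only to the best-response constraint realizes exactly that selection. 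Conversely, any maximizer $(\Pi,A)$ of \eqref{eq:spe-matrix} yields a $\Pi$ that is sender-optimal given the induced receiver behavior, so the pair is a sender-preferred SPE.

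Finally I would close by emphasizing the absence of circularity: unlike \eqref{eq:se-br}, the feasible $\Pi$ in \eqref{eq:spe-matrix} need not be a best response against a fixed $A$; instead $A$ is slaved to $\Pi$ through the best-response inequality and \eqref{eq:lambda}, which is precisely why \eqref{eq:spe-matrix} admits backward induction (equivalently, the posteriors reformulation \eqref{eq:spe}) whereas the PBE system does not. Collecting the objective $\tr(P\Pi A U)$, the belief consistency, and the receiver best-response condition then reproduces \eqref{eq:spe-matrix}, completing the argument.
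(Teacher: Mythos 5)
Your proposal is correct and follows essentially the same route as the paper, which offers no separate proof of \Cref{prop:spe-matrix}: the proposition is presented as an immediate consequence of the same term-by-term matrix translation (payoff as $\tr(P\Pi A U)$, Bayes rule as Hadamard division, receiver optimality as $\diag(AV\Lambda)\succeq\diag(A'V\Lambda)$) that produced \Cref{prop:pbe-matrix}, now applied to the bilevel problem \eqref{eq:op}. Your explicit justification of the optimistic-bilevel/joint-maximization equivalence (sender-preferred tie-breaking) is a point the paper leaves implicit, and you also silently correct what appears to be a typo in the statement, writing $\Lambda=P\Pi\oslash(\1\1^\tp P\Pi)$ without the transpose, consistent with \eqref{eq:lambda} and with the dimensions $\Lambda, P\Pi\in\R^{M\times N}$.
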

Despite the belief system, \eqref{eq:spe-matrix} takes the same bilevel optimization formulation as Stackelberg equilibrium, as both equilibrium concepts require sequential rationality \cite{fudenberg}.  In contrast, \eqref{eq:se-br} and \eqref{eq:re-br} resemble the best response condition in Nash equilibrium. Given this observation, the statement that PoT is less than $1$ becomes foreseeable since comparative studies on Nash and Stackelberg equilibrium have already arrived at similar conclusions \cite{simaan77nash-stackelberg}.      
\section{Bayesian Plausibility and TSB-Programming}\label{sec:z}
The above discussion, though intuitive, does not reveal the tightness of the upper bound. This section develops the TSB programming approach to solve for PBE in \Cref{prop:pbe-matrix}, further enabling us to prove that PoT is tightly upper bounded by 1. As the existence of PBE remains an open question\cite{mensch20pbe}, we impose a standing assumption on the existence of non-degenerate PBE to secure the well-posedness of the proposed programming. 
\begin{assumption}\label{ass:exist}
For any finite communication games in this work, there exists at least one non-degenerate PBE.  
\end{assumption}

Recall that Bayesian plausibility is a sanity check for an information structure: all possible posterior beliefs should be consistent with the prior under the information structure. Using mathematical terms, $ p=\sum_{n}\gamma_n\Lambda_n$, and $\gamma_n=\sum_{m}p_m\Pi_{mn}$ denotes the probability of generating $s_n$, which is a discrete counterpart to $\gamma(s)$ defined in \Cref{sec:model}. For non-degenerate PBE, $\gamma_n>0$ for all $n\in [N]$. 

Note that $\lambda_{mn}=\frac{p_m \Pi_{mn}}{\sum_{m'} p_{m'}\Pi_{m'n}}$, then $p_m\Pi_{mn}=\lambda_{mn}\gamma_n$. Hence, the sender's expected payoff can be rewritten using posterior beliefs $\Lambda$ and $\Gamma:=\diag(\gamma)$, as shown below:
\begin{align*}
    \tr(P\Pi A U)&=\sum_{m}p_m\sum_{n}\Pi_{mn}\sum_{k}A_{nk}U_{km}\\
    &=\sum_{n}\sum_{k}\sum_{m}A_{nk}U_{km}p_m\Pi_{mn}\\
    &=\sum_{n}\sum_{k}\sum_{m}A_{nk}U_{km} \lambda_{mn}\gamma_n = \tr(AU\Lambda\Gamma).
\end{align*}
The above deduction actually gives an elementary proof of the one-to-one correspondence between information structure and the posterior distribution we discussed in \eqref{eq:spe}. Meanwhile, as $\gamma\succ 0$, then $ \operatorname{diag}(AV\Lambda)\succeq \operatorname{diag}(A'V\Lambda ) \Leftrightarrow \operatorname{diag}(AV\Lambda\Gamma)\succeq \operatorname{diag}(A'V\Lambda \Gamma )$. Finally, one can see that both the sender's and the receiver's best response conditions involve the matrix product of  $\Lambda$ and  $\Gamma$, creating another matrix representation of PBE presented in \Cref{thm:z}.  

Define $Z=\Lambda\Gamma\in \R^{M\times N}$, then according to Bayesian plausibility, $Z\1=\Lambda \Gamma\1=\Lambda \gamma=p$. Another constraint on $Z$ arises from the left stochasticity of $\Lambda$, i.e., $\1^\tp \Lambda = \1^\tp $, implying that $\1^\tp Z=1^\tp \Lambda \Gamma = 1^\tp \Gamma=\gamma^\tp \succ 0$. Hence, $0 \prec Z^\tp\1 \prec \1$. Summarizing these constraints, we denote by $\mathcal{Z}:= \{Z| Z\in \R_{\geq 0}^{M\times N}, Z\1=p, 0 \prec Z^\tp\1 \prec \1 \}$ the set of Bayesian plausible matrices.
We then arrive at the following theorem where PBE is characterized using the $Z$ matrix. 
\begin{theorem}\label{thm:z}
    For a finite communication game, a pair matrices of $(Z,A)$ is a non-degenerate perfect Bayesian equilibrium if it satisfies 
    \begin{equation}
        \begin{aligned}
    & \operatorname{Tr}(AUZ)\geq \operatorname{Tr}(AUZ'), \forall Z'\in \mathcal{Z},\\
     & \operatorname{diag}(AVZ) \succeq  \operatorname{diag}(A'VZ), \forall A'\in \R_{\geq 0}^{N\times K}, A'\mathds{1}=\mathds{1},\\
     & A^\tp\1=\1, Z\in \mathcal{Z}.
        \end{aligned} \label{eq:z}
    \end{equation}
\end{theorem}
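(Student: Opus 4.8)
The plan is to read \Cref{thm:z} as a pure change-of-variables restatement of \Cref{prop:pbe-matrix} through the substitution $Z := P\Pi = \Lambda\Gamma$, and to show that this substitution is a bijection between non-degenerate information structures (with their Bayes-consistent belief $\Lambda$) and the set $\mathcal{Z}$, under which each of the three PBE conditions transforms exactly into the corresponding line of \eqref{eq:z}. The first step is to pin down the bijection. Assuming $p\succ 0$ (null states may be discarded so that $P=\diag(p)$ is invertible), the map $\Pi\mapsto Z=P\Pi$ has inverse $\Pi = P^{-1}Z$; one checks $Z\mathds{1}=P\Pi\mathds{1}=P\mathds{1}=p$ and $Z\in\R_{\geq 0}^{M\times N}$ in the forward direction, and $\Pi\mathds{1}=P^{-1}Z\mathds{1}=P^{-1}p=\mathds{1}$, $\Pi\in\R_{\geq 0}^{M\times N}$ in reverse. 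Setting $\gamma=Z^\tp\mathds{1}$ and $\Gamma=\diag(\gamma)$, non-degeneracy ($\gamma\succ 0$) is exactly $Z^\tp\mathds{1}\succ 0$, while $Z^\tp\mathds{1}\prec\mathds{1}$ follows automatically, since $\mathds{1}^\tp\gamma=(Z\mathds{1})^\tp\mathds{1}=p^\tp\mathds{1}=1$ forces each coordinate below $1$ once $N\geq 2$. Recovering $\Lambda=Z\Gamma^{-1}$ yields a left-stochastic matrix ($\mathds{1}^\tp\Lambda=\gamma^\tp\Gamma^{-1}=\mathds{1}^\tp$) that satisfies the consistency relation \eqref{eq:lambda}, because $P\Pi=Z$ and $\mathds{1}\mathds{1}^\tp P\Pi=\mathds{1}\gamma^\tp$ give $\Lambda_{mn}=Z_{mn}/\gamma_n$. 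Hence the image of all non-degenerate $\Pi$ is precisely $\mathcal{Z}$, and $Z=\Lambda\Gamma$ recovers the belief system.

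Next I would translate the two ``pointwise'' conditions, which the lead-up to the theorem has already largely done. The identity $\tr(P\Pi A U)=\tr(AU\Lambda\Gamma)=\tr(AUZ)$ converts the sender's objective, and because $\gamma\succ 0$ the receiver's optimality \eqref{eq:re-br}, namely $\diag(AV\Lambda)\succeq\diag(A'V\Lambda)$, is equivalent coordinatewise to $\diag(AVZ)\succeq\diag(A'VZ)$ (the two sides differ only by the positive scalar $\gamma_n$ in each coordinate). Combined with $A\mathds{1}=\mathds{1}$ and $Z\in\mathcal{Z}$, this settles the necessity direction immediately: from a non-degenerate PBE $(\Pi,A,\Lambda)$, the pair $(Z,A)=(P\Pi,A)$ satisfies every line of \eqref{eq:z}, where for the sender's inequality I simply note that \eqref{eq:se-br} holds against all right-stochastic $\Pi'$ and hence against the subfamily mapping into $\mathcal{Z}$.

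The step I expect to be the genuine obstacle is the sender's best response in the sufficiency direction, because \eqref{eq:se-br} and \eqref{eq:z} quantify over different feasible sets. In \eqref{eq:se-br} the sender may deviate to \emph{any} right-stochastic $\Pi'$, including ones inducing degenerate signal distributions (some column sum $\gamma'_n=0$), whereas \eqref{eq:z} only tests $Z'\in\mathcal{Z}$, i.e.\ deviations with $\gamma'\succ 0$. Writing $Z'=P\Pi'$, the image of all right-stochastic $\Pi'$ is the closed polytope $\overline{\mathcal{Z}}=\{Z'\in\R_{\geq 0}^{M\times N}: Z'\mathds{1}=p\}$, and $\mathcal{Z}$ is the subset on which every column sum $\gamma'_n=(Z'^\tp\mathds{1})_n$ is strictly positive. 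The resolution is that the objective $Z'\mapsto\tr(AUZ')$ is linear, hence continuous, and $\mathcal{Z}$ is dense in $\overline{\mathcal{Z}}$: any boundary point $Z'$ is the limit of $(1-\epsilon)Z'+\epsilon Z_0\in\mathcal{Z}$ as $\epsilon\downarrow 0$, for a fixed $Z_0\in\mathcal{Z}$ (e.g.\ the uninformative structure $Z_0=p\,q^\tp$ with $q\succ 0$). Therefore the inequality $\tr(AUZ)\geq\tr(AUZ')$ for all $Z'\in\mathcal{Z}$ passes to the limit and holds on all of $\overline{\mathcal{Z}}$, which is exactly \eqref{eq:se-br}. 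With this, any feasible $(Z,A)$ for \eqref{eq:z} reconstructs a full non-degenerate PBE $(\Pi,A,\Lambda)$ via the bijection above, closing the equivalence. I would flag that the restriction to non-degenerate equilibria (the open constraint $\gamma\succ 0$ defining $\mathcal{Z}$) is precisely what makes this density/continuity step necessary, and that \Cref{ass:exist} together with \Cref{def:non-pbe} guarantees the resulting characterization is non-vacuous.
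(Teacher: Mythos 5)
Your proposal is correct and follows the same core route as the paper's own proof: the change of variables $Z=P\Pi=\Lambda\Gamma$, the trace identity $\tr(P\Pi AU)=\tr(AUZ)$, and the observation that the receiver's conditions in \Cref{prop:pbe-matrix} and in \eqref{eq:z} differ coordinatewise only by the positive factors $\gamma_n$, so they are equivalent exactly when $\gamma\succ 0$. Where you go beyond the paper is the sender's optimality in the sufficiency direction. The paper's proof disposes of this in one clause (``the corresponding $\Pi$ satisfies \eqref{eq:se-br}''), which silently ignores the mismatch you flag: \eqref{eq:z} only tests deviations $Z'\in\mathcal{Z}$, i.e.\ deviations with every signal sent with positive probability, whereas \eqref{eq:se-br} quantifies over \emph{all} right-stochastic $\Pi'$, including degenerate ones whose image $Z'=P\Pi'$ lies on the boundary of the polytope $\{Z'\in\R_{\geq 0}^{M\times N}:Z'\1=p\}$. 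Your resolution — linearity (hence continuity) of $Z'\mapsto\tr(AUZ')$ together with density of $\mathcal{Z}$ in that closed polytope, via the perturbation $(1-\epsilon)Z'+\epsilon\, p\,q^\tp$ with $q\succ 0$ — closes this gap cleanly, and it is precisely the step needed for the conclusion ``$(\Pi,A,\Lambda)$ is a non-degenerate PBE'' to hold in the sense of \Cref{def:non-pbe}, since that definition inherits \eqref{eq:sender-br} over the full deviation set. Your explicit bookkeeping that $p\succ 0$ is needed to invert $\Pi=P^{-1}Z$, and that $Z^\tp\1\prec\1$ is automatic from $\gamma\succ 0$, $\1^\tp\gamma=1$, $N\geq 2$, are also points the paper leaves implicit. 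In short: same approach, but your write-up is the more complete proof.
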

\begin{proof}
The proof proceeds in the reverse direction of the deductions above. If $Z$ and $A$ are the solution pair, then let $\gamma=Z^\tp \1$, and $\Lambda_n=Z_n\oslash \gamma_n$. It is easy to see that $\1^\tp \gamma =\1^\tp Z^\tp \1=p^\tp \1=\1$ and $\1^\tp \Lambda_n=\1^\tp Z_n\oslash \gamma_n=\1$, implying that $(\gamma, \Lambda)$ constitutes a valid pair of the posterior distribution and the associated belief system. Hence, the corresponding $\Pi$ satisfies \eqref{eq:se-br}, and naturally $A$ satisfies \eqref{eq:re-br} ($\gamma\succ 0$), which concludes the proof.    
\end{proof}
The significance of \Cref{thm:z} is self-evident: no Hadamad division is involved, and \eqref{eq:z} is a constrained bilinear programming with respect to $Z$ and $A$. Intuitively, $Z$ matrix transfers the equilibrium problem into the posterior belief space, eliminating the nonlinearity introduced by Bayes rule [see \eqref{eq:lambda}]. The remaining nonlinearity (bilinearity) emerges from the nature of Nash play \cite{basar98game}.  

\noindent\textbf{Belief-Dominant Equilibrium:}
Even though \Cref{thm:z} seems to be the light at the end of the tunnel, directly solving PBE using \eqref{eq:z} is still daunting. It is cumbersome to vectorize $Z$ and $A$ and then transform \eqref{eq:z} into a standard bilinear form. Since the vectorization concatenates row or column vectors of $Z$ and $A$, the resulting vectors are no longer stochastic vectors, rendering many techniques in bilinear programming \cite[Chapter 3.4]{basar98game} inapplicable.  The following presents an equivalence between \eqref{eq:z} and a two-stage-bilinear programming, where the optimal solutions to the first stage problem constitute the feasible set of the second stage programming.      

Recall that the first inequality in \eqref{eq:z} gives 
\begin{align*}
    \sum_{i\in [N]}a_i^\tp U z_i = \operatorname{Tr}(AUZ) \geq \operatorname{Tr}(AUZ') = \sum_{i\in [N]}a_i^\tp U z'_i,
\end{align*}
where $a_i(a_i')$ and $z_i(z_i')$ are the $i$-th row vector of $A(A')$ and $i-$th column vector of $Z(Z')$, respectively.  The trace inequality in \eqref{eq:z} implies that the sender does not deviate from the equilibrium belief system $\Lambda$ and the distribution $\Gamma$, as the resulting average payoff over every signal is optimal. Note that $z_i=\gamma_i\lambda_i$, $\lambda_i$ is the $i$-th column of $\Lambda$, and consider the following constraints: $ a_i^\tp U \lambda_i\geq a_i^\tp U \lambda'_i$, for any $i\in [N]$, $\lambda'_i\in \Delta([N])$. Compared to the trace, the newly introduced ones require the equilibrium belief itself to be optimal for each signal, fixing the receiver's move. The latter is stronger than the former. We refer to the PBE characterized by the stronger constraints as belief-dominant PBE, as the belief system $\Lambda$ best responds to $A$ and dominates all other belief systems.   
\begin{definition}[Belief-Dominant PBE]
    A non-degenerate PBE $(Z=\Lambda \Gamma,A)$ is said to be belief-dominant, if the belief system $\Lambda$ best responds to $A$ for each signal: $ a_i^\tp U \lambda_i\geq a_i^\tp U \lambda'_i$, $a_i=(A^\tp)_i$, $\forall i\in [N]$, $\forall \lambda'_i\in \Delta([N])$.
\end{definition}
The notion of belief dominance only applies to non-degenerate PBE, as the belief vector $\lambda_i$ can be arbitrary when $\gamma_i=0$ [see \eqref{eq:consistency}] in degenerate cases. The following presents an example of belief-dominant PBE.  
\begin{example}[Non-degenerate and Belief-Dominant PBE]
\label{exam:bd-pbe}
 Consider a two-state, two-action, and two-signal case: $M=N=K=2$. The sender's and the receiver's utility matrices are 
 \begin{align*}
    U= \begin{bmatrix}
  1& 0\\
  0 & 1/2
 \end{bmatrix}, \quad V= \begin{bmatrix} 1 & 0\\
 1 &2 
 \end{bmatrix}.
 \end{align*}
The prior is $p=(1/2, 1/2)$. Both parties prefer $a_2$ in state $\omega_2$. The sender prefers $a_1$ in state $\omega_1$, while the receiver is indifferent between two actions. 

As shown in the utility matrices, the interests of the sender and the receiver are aligned. Hence, one special perfect Bayesian equilibrium strategy for the sender is the so-called truth-telling strategy: $\Pi=I$, also referred to as the separating equilibrium \cite[Chapter 8]{fudenberg}. As a non-degenerate PBE, the separating equilibrium is given by 
\begin{align*}
 & \Pi=\Lambda=\begin{bmatrix}
     1 & 0\\
     0 & 1
     \end{bmatrix},
    \Gamma  =\begin{bmatrix}
     1/2 & 0\\
     0 & 1/2
     \end{bmatrix},
     A = \begin{bmatrix}
         1-\epsilon & \epsilon\\
     0 & 1
     \end{bmatrix},
\end{align*}
where $\epsilon\in [0,1]$. The $\epsilon$ entries in $A$ are due to the fact that the receiver is indifferent between two actions when it observes $s_1$ and realizes that the state is $\omega_1$. In other words, there is a continuum of non-degenerate PBE. Direct calculation gives that those PBE with $\epsilon\in [0, 2/3]$ is belief-dominant.    
\end{example}
\begin{remark}
    The existence of belief-dominant PBE is demonstrated in the above example, even though a general existence guarantee remains unclear. This work does not attempt to answer this question, as it is beyond the scope. Instead, we assume the existence holds for the communication games. Another remark is that belief dominance yields a subset of non-degenerate PBE, for which \eqref{eq:z} admits a simpler formulation detailed below. Yet, \eqref{eq:z} itself returns every non-degenerate PBE. One can verify that the following non-dominant equilibria also solve the programming problem: 
    \begin{align*}
         Z=\begin{bmatrix}
         \frac{1}{2} & 0 \\
         \frac{\epsilon}{2} & \frac{1-\epsilon}{2}
     \end{bmatrix},
     \Lambda =\begin{bmatrix}
         \frac{1}{1+\epsilon} & 0\\
 \frac{\epsilon}{1+\epsilon} & 1
     \end{bmatrix},
     A=\begin{bmatrix}
 0 & 1\\
 0 & 1
 \end{bmatrix}, \epsilon\in [0,1).
    \end{align*}
\end{remark}

\textbf{Two-Stage-Bilinear Programming} For belief-dominant PBE, the constraints in \eqref{eq:z} reduces to 
\begin{equation}
    \begin{aligned}
        &a_i^\tp U \lambda_i\geq a_i^\tp U \lambda'_i, \lambda_i^\tp \mathds{1}=\lambda_i'^\tp \mathds{1}=1, \lambda_i, \lambda_i'\geq 0,\\
        &a_i^\tp  V \lambda_i \geq a_i'^\tp V \lambda_i, a_i^\tp \mathds{1}=a_i'^\tp \mathds{1}=1, a_i,a_i'\geq 0,
    \end{aligned}\label{eq:nash-constraint}
\end{equation}
 for any $i\in [N]$. The benefit of considering these constraints is that all decision variables involved are from the probability simplex, leading to the helpful lemma below. 
 \begin{lemma}
 \label{lem:bilinear}
 A pair  $(\{a_i\},\{\lambda_i\})$ constitutes a feasible point to \eqref{eq:nash-constraint} if and only if there exists a pair $(\{x_i\}, \{y_i\})$ such that $(\{a_i\}, \{\lambda_i\}, \{x_i\}, \{y_i\})$ solves the  bilinear programming
 \begin{equation}
    \begin{aligned}
    \max_{a_i,\lambda_i, x_i,y_i} & \sum_{i\in [N]} a_i^\tp U \lambda_i +\sum_{i\in [N]} a_i^\tp V \lambda_i -\sum_{i\in [N]}x_i- \sum_{i\in [N]} y_i \\
    \text{subject to }&  U^\tp a_i \preceq x_i \mathds{1}, V \lambda_i \preceq y_i \mathds{1},\\
    & a_i\geq 0, \lambda_i\geq 0, a_i^\tp \mathds{1}=1, \lambda_i^\tp \mathds{1}=1.
\end{aligned} \label{eq:bilinear}   
 \end{equation}
 \end{lemma}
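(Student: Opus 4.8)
The plan is to recognize that, for each fixed $i\in[N]$, the two inequalities in \eqref{eq:nash-constraint} are exactly the mixed-strategy Nash equilibrium conditions of a bimatrix game in which one player chooses $a_i$ from the simplex with payoff $a_i^\tp V\lambda_i$ and the other chooses $\lambda_i$ from the simplex with payoff $a_i^\tp U\lambda_i$. Viewed this way, \eqref{eq:bilinear} is nothing but the separable sum over $i$ of the classical Mangasarian--Stone bilinear characterization of such equilibria, with $x_i$ and $y_i$ playing the role of the best-response values. I would therefore prove both directions through the standard \emph{optimal-value-equals-zero} argument, exploiting that the objective and all constraints of \eqref{eq:bilinear} decouple across $i$.

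First I would establish a weak-duality bound: the objective of \eqref{eq:bilinear} is nonpositive on its entire feasible set. Left-multiplying the constraint $U^\tp a_i\preceq x_i\1$ by the probability vector $\lambda_i^\tp$ and using $\lambda_i^\tp\1=1$ yields $a_i^\tp U\lambda_i\le x_i$; symmetrically, right-testing $V\lambda_i\preceq y_i\1$ against $a_i$ gives $a_i^\tp V\lambda_i\le y_i$. Summing these over $i$ shows each block $a_i^\tp U\lambda_i+a_i^\tp V\lambda_i-x_i-y_i$ is $\le 0$, hence the objective is at most $0$ everywhere feasible.

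For sufficiency, given $(\{a_i\},\{\lambda_i\})$ feasible for \eqref{eq:nash-constraint}, I would set $x_i:=a_i^\tp U\lambda_i$ and $y_i:=a_i^\tp V\lambda_i$. The first inequality of \eqref{eq:nash-constraint} states that $\lambda_i$ maximizes $a_i^\tp U\lambda$ over the simplex; testing against the vertices $e_m$ gives $(U^\tp a_i)_m\le a_i^\tp U\lambda_i=x_i$, i.e.\ precisely $U^\tp a_i\preceq x_i\1$, and the second inequality yields $V\lambda_i\preceq y_i\1$ analogously. With this choice the objective equals $\sum_i(x_i+y_i-x_i-y_i)=0$, which by weak duality is the global maximum, so the constructed tuple solves \eqref{eq:bilinear}. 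For the converse, suppose the tuple is optimal; since the value $0$ is both attainable and an upper bound, the optimum is $0$. The crux is a complementary-slackness step: each summand $a_i^\tp U\lambda_i-x_i$ and $a_i^\tp V\lambda_i-y_i$ is nonpositive, yet their sum vanishes, forcing $x_i=a_i^\tp U\lambda_i$ and $y_i=a_i^\tp V\lambda_i$ for every $i$. Substituting into $U^\tp a_i\preceq x_i\1$ gives $(U^\tp a_i)_m\le a_i^\tp U\lambda_i$ for all $m$, whence $a_i^\tp U\lambda'_i\le a_i^\tp U\lambda_i$ for any $\lambda'_i$ in the simplex, recovering the first inequality of \eqref{eq:nash-constraint}; the second follows identically from $V\lambda_i\preceq y_i\1$.

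I expect the only genuine subtlety to be this per-index complementary-slackness bookkeeping, namely arguing that a sum of nonpositive blocks equalling zero forces each block to vanish, so that the aggregate bilinear maximization faithfully encodes the $N$ independent best-response conditions. Everything else reduces to testing vector inequalities against simplex vertices, and the Nash-to-bilinear correspondence itself is classical once the per-signal game structure is identified.
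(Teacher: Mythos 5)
Your proof is correct and takes essentially the same route as the paper's: the weak-duality step showing the objective of \eqref{eq:bilinear} is nonpositive on the feasible set, the construction $x_i = a_i^\tp U\lambda_i$, $y_i = a_i^\tp V\lambda_i$ attaining value zero for one direction, and the zero-optimum-plus-per-block-complementary-slackness argument for the converse. The only (minor) difference is how attainability of the zero optimum is justified in the converse: you lean on the classical Mangasarian--Stone/Nash-existence view of each per-signal block, while the paper invokes its standing Assumption~1 on the existence of equilibria; both serve to close the same small step.
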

\begin{proof}
    For any feasible $\{a_i\},\{\lambda_i\}$ to \eqref{eq:bilinear}, the constraints evidently imply that $a_i^\tp V \lambda_i\leq y_i$ and $a_i^\tp U \lambda_i \leq x_i$. Hence, $\sum_{i} a_i^\tp U \lambda_i +\sum_{i} a_i^\tp V \lambda_i -\sum_{i}x_i- \sum_{i} y_i\leq 0$, showing that the objective value is non-positive. If $\{a_i^*\}$ and $\{\lambda_i^*\}$ are the equilibrium pair to \eqref{eq:nash-constraint},then let $y_i^*=a_i^{*\tp} V \lambda_i^*$, $x_i^*=a_i^{*\tp} U \lambda_i^*$. Then, the resulting quadruple is feasible, and the corresponding value is zero, indicating optimality. 

    Conversely, let $(\{\bar{a}_i\}, \{\bar{\lambda}_i\}, \{\bar{x}_i\}, \{\bar{y}_i\})$ be a solution to \eqref{eq:bilinear}.  From the constraints, we have $V\bar{\lambda}_i\preceq\bar{y}_i \mathds{1}$ and $U^\tp \bar{a}_i \preceq \bar{x}_i \mathds{1}$.  Hence, for any $a_i, \lambda_i$, $a_i^\tp V \bar{\lambda}_i \leq \bar{y}_i$, $\bar{a}_i^\tp U \lambda_i \leq \bar{x}_i$. Likewise, $\bar{a}_i^\tp V \bar{\lambda}_i \leq \bar{y}_i$, $\bar{a}_i^\tp U \bar{\lambda}_i \leq \bar{x}_i$. Note that \Cref{ass:exist} guarantees the existence of PBE, at which the objective value in \eqref{eq:bilinear} is zero. Hence, zero is attainable: $\sum_{i\in [N]} \bar{a}_i^\tp U \bar{\lambda}_i +\sum_{i\in [N]} \bar{a}_i^\tp V \bar{z}_i - \sum_{i} \bar{x}_i -\sum_{i}\bar{y}_i=0$. Then, for the solution pairs, $\bar{a}_i^\tp U \bar{\lambda}_i=\bar{x}_i\geq a_i^\tp U \lambda_i$, $\bar{a}_i^\tp V \bar{\lambda}_i = \bar{y}_i\geq a_i^\tp V \bar{\lambda}_i$.
\end{proof}
\begin{corollary}
\label{coro:signal-payoff}
For any solution $(\{a_i\}, \{\lambda_i\}, \{x_i\}, \{y_i\})$  to the bilinear programming \eqref{eq:bilinear}, if there exists a $\gamma\in \Delta([M])$ such that $\Lambda\gamma=p$, $\Lambda_i=\lambda_i$, then $(\Lambda, \gamma)$ is Bayesian plausible. Under the corresponding information structure, the sender's (receiver's) expected payoff under signal $s_i$ is $x_i$ ($y_i$).   
\end{corollary}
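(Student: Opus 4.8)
The plan is to read the claim off as an almost immediate consequence of the optimality analysis already carried out in \Cref{lem:bilinear}, combined with the constructive content of Bayesian plausibility. The proof splits naturally into the plausibility assertion and the two payoff identities, and neither requires new machinery.

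First I would dispatch the Bayesian plausibility claim. The feasibility constraints of \eqref{eq:bilinear} force $\lambda_i \succeq 0$ and $\lambda_i^\tp \1 = 1$, so every column $\Lambda_i = \lambda_i$ is a legitimate posterior; together with the hypothesis $\Lambda\gamma = p$ and $\gamma$ being a distribution over the $N$ signals, this is exactly the defining identity $\sum_{n}\gamma_n\Lambda_n = p$ of a Bayesian-plausible pair $(\Lambda,\gamma)$. Nothing further is needed here.

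Next I would exhibit the ``corresponding information structure'' concretely rather than merely invoke the abstract existence result of \cite{kamenica11BP}. Setting $P\Pi = \Lambda\Gamma$, i.e.\ $p_m \Pi_{mn} = \lambda_{mn}\gamma_n$, and assuming $p \succ 0$ (null states may be deleted beforehand), I would check that $\Pi$ is right stochastic via $\sum_{n}\Pi_{mn} = \tfrac{1}{p_m}(\Lambda\gamma)_m = \tfrac{p_m}{p_m} = 1$, with nonnegativity immediate. Hence $\Pi$ induces $(\Lambda,\gamma)$, and under $\Pi$ the posterior upon observing $s_i$ is $\lambda_i$ while the receiver responds with $a_i$.

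The payoff identities then fall out of the structure of \eqref{eq:bilinear}. The sender's expected payoff conditioned on signal $s_i$ equals $\sum_{m,k}\lambda_{mi}a_{ik}U_{km} = a_i^\tp U \lambda_i$, and the receiver's equals $a_i^\tp V \lambda_i$. Pairing the constraints $U^\tp a_i \preceq x_i\1$ and $V\lambda_i \preceq y_i\1$ against the distributions $\lambda_i$ and $a_i$ respectively yields the termwise bounds $a_i^\tp U \lambda_i \leq x_i$ and $a_i^\tp V \lambda_i \leq y_i$; since \Cref{lem:bilinear} guarantees that the optimal value of \eqref{eq:bilinear} is $0$, the sum of these $2N$ non-positive slacks vanishes, forcing each to be zero, so $a_i^\tp U \lambda_i = x_i$ and $a_i^\tp V \lambda_i = y_i$ for every $i$. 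These are precisely the asserted conditional payoffs. The only delicate points are bookkeeping: the well-definedness of $\Pi$ in the edge case $p_m = 0$ (resolved by deleting null states) and confirming that $a_i^\tp U \lambda_i$ is genuinely the payoff conditioned on $s_i$. Both are routine, so the corollary is in essence a restatement of \Cref{lem:bilinear} in the language of induced information structures.
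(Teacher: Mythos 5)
Your proposal is correct and follows essentially the same route as the paper: the paper's (one-line) proof rests precisely on the fact that any solution quadruple to \eqref{eq:bilinear} satisfies $a_i^\tp U \lambda_i = x_i$ and $a_i^\tp V \lambda_i = y_i$, which you derive exactly as the paper does inside the proof of \Cref{lem:bilinear} (constraint inequalities plus the zero optimal value forcing every slack to vanish). Your additional explicit construction of $\Pi$ via $p_m\Pi_{mn}=\lambda_{mn}\gamma_n$ merely spells out the correspondence the paper already established in \Cref{sec:z}, so it is elaboration rather than a different argument.
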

The proof of \Cref{coro:signal-payoff} rests on the fact that the solution quadruple to \eqref{eq:bilinear} satisfies the equations: $a_i^\tp U \lambda_i=x_i$, $a_i^\tp V \lambda_i=y_i$. As one can see from \Cref{coro:signal-payoff}, an optimal posterior distribution $\gamma$ should maximize the expected payoff of all signals: $\sum_{i}\gamma_ix_i$. This observation leads to \Cref{thm:bb}, where solutions to the bilinear programming in \eqref{eq:bilinear} constitute a feasible set to another bilinear programming.
\begin{theorem}[Two-Stage-Bilinear Programming]
\label{thm:bb}
For any solution $(Z=(\Lambda,\gamma), A)$ to \eqref{eq:z} that is belief-dominant, it is also a solution to \eqref{eq:bb}. Conversely, if $\{(A^\tp)_i, \Lambda_i, x_i, y_i, \gamma_i\}$ solves \eqref{eq:bb} and satisfies $\Lambda\gamma=p$, then $(\Lambda, \gamma, A)$ solves \eqref{eq:z}.
\begin{equation}
    \begin{aligned}
        \max_{\gamma_i, x_i} & \sum_{i} \gamma_i x_i \\
        \text{subject to } & \sum_i \gamma_i=1, \gamma_i>0, \\
        & \{(A^\tp)_i, \Lambda_i, x_i, y_i\} \text{ solves } \eqref{eq:bilinear}
    \end{aligned}\label{eq:bb}
\end{equation}
\end{theorem}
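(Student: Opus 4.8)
The plan is to prove the two implications of the claimed equivalence separately, in both directions using \Cref{lem:bilinear} to dispatch the per-signal (first-stage) conditions and the identity $\tr(AUZ)=\sum_{i\in[N]}a_i^\tp U z_i=\sum_{i\in[N]}\gamma_i\,a_i^\tp U\lambda_i$ (with $a_i=(A^\tp)_i$, $\lambda_i=\Lambda_i$, $z_i=\gamma_i\lambda_i$, and $\gamma=Z^\tp\1$) to connect the first stage to the second-stage objective $\sum_i\gamma_i x_i$. The recurring device is that belief-dominance forces $x_i=a_i^\tp U\lambda_i=\max_{\lambda}a_i^\tp U\lambda$, so that for \emph{any} $Z'\in\mathcal{Z}$ with columns $z_i'$ and weights $\gamma_i'=\1^\tp z_i'$ one has $a_i^\tp U z_i'=\gamma_i'\,a_i^\tp U\lambda_i'\le\gamma_i' x_i$, hence $\tr(AUZ')\le\sum_i\gamma_i' x_i$.

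\emph{Forward direction.} Given a belief-dominant solution $(Z=\Lambda\Gamma,A)$ of \eqref{eq:z}, I would first localize its two best-response inequalities. Since $\gamma\succ 0$, the receiver inequality $\diag(AVZ)\succeq\diag(A'VZ)$ is equivalent row-by-row to $a_i^\tp V\lambda_i\ge a_i'^\tp V\lambda_i$ for all stochastic $a_i'$, and belief-dominance is exactly $a_i^\tp U\lambda_i\ge a_i^\tp U\lambda_i'$ for all $\lambda_i'$; together these are \eqref{eq:nash-constraint}, so \Cref{lem:bilinear} supplies $x_i=a_i^\tp U\lambda_i$ and $y_i=a_i^\tp V\lambda_i$ making $\{a_i,\lambda_i,x_i,y_i\}$ a first-stage solution of \eqref{eq:bilinear}. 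Membership $Z\in\mathcal{Z}$ gives $\gamma\succ 0$ and $\sum_i\gamma_i=\1^\tp Z^\tp\1=p^\tp\1=1$, so $(\gamma,x)$ is feasible for \eqref{eq:bb} with value $\tr(AUZ)=\sum_i\gamma_i x_i$. For optimality I would combine the sender inequality of \eqref{eq:z}, namely $\sum_i\gamma_i x_i=\tr(AUZ)\ge\tr(AUZ')$ for every $Z'\in\mathcal{Z}$, with the belief-dominance bound above.

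\emph{Converse direction.} Suppose $\{a_i,\lambda_i,x_i,y_i,\gamma_i\}$ solves \eqref{eq:bb} and $\Lambda\gamma=p$. Setting $\Gamma=\diag(\gamma)$ and $Z=\Lambda\Gamma$, the plausibility hypothesis gives $Z\1=\Lambda\gamma=p$, while $\gamma\succ 0$, $\sum_i\gamma_i=1$ and left-stochasticity of $\Lambda$ give $Z^\tp\1=\gamma$ with $0\prec\gamma\prec\1$, so $Z\in\mathcal{Z}$. The first-stage constraint of \eqref{eq:bb} lets \Cref{lem:bilinear} return the per-signal inequalities \eqref{eq:nash-constraint}; multiplying the receiver part by $\gamma_i\succ 0$ and aggregating recovers $\diag(AVZ)\succeq\diag(A'VZ)$, while \Cref{coro:signal-payoff} identifies $x_i=a_i^\tp U\lambda_i$, so $\tr(AUZ)=\sum_i\gamma_i x_i$. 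The sender condition of \eqref{eq:z} then follows cleanly: for any $Z'\in\mathcal{Z}$ the belief-dominance bound gives $\tr(AUZ')\le\sum_i\gamma_i' x_i$, and since $(\gamma',\{a_i,\lambda_i,x_i,y_i\})$ is itself feasible for \eqref{eq:bb} (same first-stage data, merely reweighted), second-stage optimality of $(\gamma,x)$ yields $\sum_i\gamma_i' x_i\le\sum_i\gamma_i x_i=\tr(AUZ)$, hence $\tr(AUZ)\ge\tr(AUZ')$.

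\emph{Main obstacle.} I expect the delicate point to be the optimality half of the forward direction: showing that the PBE's $(\gamma,x)$ actually \emph{attains} the second-stage maximum rather than merely being feasible. The sender condition in \eqref{eq:z} only compares against deviations $Z'\in\mathcal{Z}$ with the receiver strategy $A$ \emph{held fixed}, whereas the feasible set of \eqref{eq:bb} also ranges over re-selections of the first-stage Nash equilibria, hence over different $A'$ and $\Lambda'$. Reconciling these requires that the admissible second-stage weights remain tied to the plausibility constraint $Z'\1=p$ (equivalently $\sum_i\gamma_i'\lambda_i'=p$) carried inside $\mathcal{Z}$; it is precisely this constraint that restricts $\gamma'$ and prevents the objective from being maximized by concentrating all mass on the single best per-signal Nash payoff. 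I would therefore take care to formulate the second stage over plausibility-consistent configurations, and I would sanity-check on a worked instance that, without this coupling, the unconstrained optimum of \eqref{eq:bb} can strictly exceed the PBE value and be attained at a non-plausible point — which is exactly why the converse must re-impose $\Lambda\gamma=p$ to recover a genuine equilibrium.
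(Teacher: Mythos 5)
Your converse direction is, step for step, the paper's own proof: recover the receiver's condition by multiplying the per-signal inequalities by $\gamma_i>0$, get $Z=\Lambda\diag(\gamma)\in\mathcal{Z}$ from the hypothesis $\Lambda\gamma=p$, and obtain the sender's condition by chaining the belief-dominance bound $a_i^\tp U\lambda_i'\le x_i$ (automatic for any solution of \eqref{eq:bilinear}) with second-stage optimality of $\gamma$ against reweightings of the \emph{same} first-stage data. That half is correct and needs no changes.

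The forward direction is where the substance lies, and your ``main obstacle'' paragraph identifies a genuine defect of the statement itself, which the paper hides behind the single word ``straightforward.'' Under the weak reading---a belief-dominant PBE yields a \emph{feasible} point of \eqref{eq:bb} whose objective value equals $\tr(AUZ)$---your feasibility-plus-value-identity argument via \Cref{lem:bilinear} settles the matter, and this is evidently all the paper's one-line proof covers. Under the strict reading (the PBE \emph{attains} the maximum of \eqref{eq:bb}), the claim is false exactly as you suspect, and the paper's own \Cref{exam:bd-pbe} confirms it: assigning the first-stage solution $a=(1,0)$, $\lambda=(1,0)$, $x=1$ to \emph{both} signals with $\gamma=(1/2,1/2)$ is feasible for \eqref{eq:bb} as written and has value $1$, strictly above the truth-telling PBE value $3/4$; this point violates $\Lambda\gamma=p$, which is precisely your concentration scenario. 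So no argument combining the sender's inequality in \eqref{eq:z} (which holds $A$ fixed) with belief dominance can close this gap---the step you flagged as delicate would indeed fail.

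One caution about your proposed repair: imposing plausibility inside the second stage is necessary but still not sufficient to make the forward implication true for \emph{every} belief-dominant PBE. In \Cref{exam:bd-pbe}, each $\epsilon\in(0,2/3]$ gives a belief-dominant, non-degenerate PBE solving \eqref{eq:z} with $\Lambda=I$ and $\gamma=p$ (hence plausible), whose value $\sum_i\gamma_i x_i=\frac{1-\epsilon}{2}+\frac{1}{4}$ is strictly below $\frac{3}{4}$; such equilibria remain feasible but non-optimal in the repaired program, so at best the sender-preferred belief-dominant PBE attains the maximum. Moreover, adding $\Lambda\gamma'=p$ to the feasible set would break the reweighting step that both you and the paper use in the converse, since the comparison point $(\gamma',\{a_i,\lambda_i,x_i,y_i\})$ need not be plausible. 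The only reading under which the theorem as printed is coherent is: forward direction $=$ feasibility with value $\tr(AUZ)$ (your first half), converse $=$ the optimality argument you and the paper both give. Your diagnosis is sharper than the paper's treatment of this point.
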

\begin{proof}
    \eqref{eq:z} $\Rightarrow$ \eqref{eq:bb} is straightforward. Conversely, consider a solution $\{a_i, \lambda_i, x_i, y_i, \gamma_i\}$ to \eqref{eq:bb}. First, as $\gamma_i>0$, $a_i^\tp  V \lambda_i \geq a_i'^\tp V \lambda_i \Leftrightarrow a_i^\tp  V \lambda_i \gamma_i\geq a_i'^\tp V \lambda_i \gamma_i, i\in [N]$, implying $\operatorname{diag}(AVZ) \succeq  \operatorname{diag}(A'VZ)$, $Z=\Lambda\diag(\gamma)$. Due to the constraints $\sum_i \gamma_i=1$, $\gamma_i>0$, and $\Lambda \gamma =p$, we have $Z\in \mathcal{Z}$. Since $\gamma$ is optimal, $\sum_i a_i^\tp U \lambda_i\gamma_i\geq \sum_i a_i^\tp U \lambda_i\gamma_i'$, for any $\gamma'\in \Delta([M]), \gamma_i'>0$. Finally, the belief dominance gives $\sum_i a_i^\tp U \lambda_i\gamma_i'\geq \sum_i a_i^\tp U \lambda_i'\gamma_i',\forall \lambda'_i\in \Delta([N])$. Combining these inequalities, $\sum_i a_i^\tp U \lambda_i\gamma_i\geq \sum_i a_i^\tp U \lambda_i\gamma_i' \geq \sum_i a_i^\tp U \lambda_i'\gamma_i'$, i.e.,$ \operatorname{Tr}(AUZ)\geq \operatorname{Tr}(AUZ')$.  
\end{proof}

Before inspecting the tightness of the upper and lower bounds, we introduce a finite-time algorithm to find the exact solution to \eqref{eq:bb}. The first step is to identify the feasible set characterized by \eqref{eq:bilinear}. One can see from the proof and \eqref{eq:nash-constraint} that the purpose of bilinear programming \eqref{eq:bilinear} is to enumerate all solutions of the bimatrix game $(U,V)$, which is equivalent to find all solutions to the following programming. Prior works \cite{muk78exact,gallo77cutting, avis10irs-nash} have established finite-time algorithms (with exponential complexity) to enumerate the exact solutions. An online solver, named \texttt{lrs} (lexicographic reverse search), is offered by \cite{avis10irs-nash}.  
\begin{equation}
    \begin{aligned}
    \max_{a,\lambda, x,y} &  a^\tp U \lambda + a^\tp V \lambda -x-  y\\
    \text{subject to }&  U^\tp a \preceq x \mathds{1}, V \lambda \preceq y \mathds{1},\\
    & a\geq 0, \lambda\geq 0, a^\tp \mathds{1}=1, \lambda^\tp \mathds{1}=1.
\end{aligned} \label{eq:ne-bimat}   
\end{equation}
Consider the binary communication game in \Cref{exam:bd-pbe}. \texttt{lrs} returns three solution tuples represented as $(a, \lambda, x)$: $\{(1/2, 1/2), (1/3, 2/3), 1/3\}$, $\{(0, 1), (0,1), 1/2\}$, and $\{(1,0), (1,0), 1\}$. We now turn to the bilinear programming in \eqref{eq:bb}. Since $N=2$, we only need to keep two solution tuples of \eqref{eq:ne-bimat} as the feasible region. As $1/3<1/2<1$, it is natural to drop the solution $\{(1/2, 1/2), (1/3, 2/3), 1/3\}$ and keep the other two. In this case, the belief matrix becomes $\Lambda=I$, and hence, $\gamma=p=(1/2, 1/2)$. Since the feasible set for the variable $\gamma$ is a singleton, then the optimal solution is exactly the truth-telling strategy in \Cref{exam:bd-pbe}. Note that \texttt{lrs} only returns extreme equilibria \cite{avis10irs-nash}, which explains why \Cref{exam:bd-pbe} presents a continuum of PBE, whereas \texttt{lrs} only gives the above three. Yet, this technical nuance does not affect the objective value in \eqref{eq:bb}. In addition to the running example above, a more detailed discussion is presented in the arXiv version.

\textbf{The Tight Upper Bound} To evaluate the PoT, we first transfer the SPE in \eqref{eq:spe-matrix} into the posterior belief space, which is given by the following programming:
\begin{equation}
    \begin{aligned}
        \max_{\{\gamma_i, \lambda_i, a_i\}} & \sum_{i}a_i^\tp U\lambda_i \gamma_i\\
    \text{subject to } & a_i^\tp V \lambda_i \geq a_i'^\tp V \lambda_i,\\
    & a_i^\tp \mathds{1}=a_i'^\tp \mathds{1}=1, a_i,a_i'\geq 0,\\
    & \sum_{i}  \lambda_i \gamma_i=p.
    \end{aligned}\label{eq:spe-belief}
\end{equation}
We remark that \eqref{eq:spe-belief} is proposed mainly for analysis purposes. A linear programming approach is available for the computation purpose, using the revelation principle \cite{dugmi16aby,tao22bp}.   Comparing \eqref{eq:spe-belief} and \eqref{eq:nash-constraint}, one can see that PBE admits one more constraint regarding $\lambda$. Hence, $\text{PoT}\leq 1$. The following introduces a special class of communication games, referred to as strictly Bayesian-posterior competitive games, for which we prove that the upper bound is attained.  
\begin{definition}[Strictly Bayesian-posterior competitiveness]
\label{def:compete}
A game with payoffs $(U,V)$ is strictly Bayesian-posterior competitive if for all $a, a'\in \Delta(A)$ , $\lambda, \lambda'\in \Delta(\Omega)$, $a^\tp U\lambda -a'^\tp U \lambda' $ and $a'^\tp V \lambda'-a^\tp V \lambda$ have the same sign.
\end{definition}
\Cref{def:compete} extends the notion of strict competitiveness \cite{adler09compete} in normal-form games to communication games. The interpretation is that if one player benefits from changing from one outcome (belief-action pair) to another, the other must suffer. \cite{adler09compete} gives a simpler characterization of strict competitive games, which is helpful in our analysis.
\begin{lemma}[\cite{adler09compete}]
    A game  $(U,V)$ is strictly competitive if and only if there exist scalar $c,d,e,f$, with $c>0,e>0$ such that $cU+dJ_{K\times M} =-eV +f J_{K\times M}$.
\end{lemma}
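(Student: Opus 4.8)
The plan is to prove the two implications separately, viewing $(U,V)$ as a bimatrix game in which one player mixes over $[K]$ via $a\in\Delta([K])$ and the other mixes over $[M]$ via $\lambda\in\Delta([M])$, so the payoffs are $a^\tp U\lambda$ and $a^\tp V\lambda$; strict competitiveness then reads exactly as the sign condition of \Cref{def:compete}. The easy direction is a one-line computation: given $cU+dJ_{K\times M}=-eV+fJ_{K\times M}$, multiply on the left by $a^\tp$ and on the right by $\lambda$ and use $a^\tp J_{K\times M}\lambda=(a^\tp\1)(\1^\tp\lambda)=1$ to get $c\,a^\tp U\lambda+d=-e\,a^\tp V\lambda+f$ for every profile. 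Hence $a^\tp U\lambda-a'^\tp U\lambda'=-\tfrac{e}{c}\,(a^\tp V\lambda-a'^\tp V\lambda')$, and since $e/c>0$ this has the same sign as $a'^\tp V\lambda'-a^\tp V\lambda$, which is precisely strict competitiveness.

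For the converse I would first reduce to a collinearity claim. Write $\phi(a,\lambda)=a^\tp U\lambda$ and $\psi(a,\lambda)=a^\tp V\lambda$. The sign condition forces $\phi(a,\lambda)=\phi(a',\lambda')\iff\psi(a,\lambda)=\psi(a',\lambda')$ and $\phi(a,\lambda)>\phi(a',\lambda')\iff\psi(a,\lambda)<\psi(a',\lambda')$, so there is a \emph{strictly decreasing} $g$ with $\psi=g\circ\phi$ on the (connected, compact) range of $\phi$. If I can show $g$ is affine, say $\psi=\alpha\phi+\beta$ with $\alpha<0$, then evaluating at the pure profiles $a=e_i,\lambda=e_j$ gives $V_{ij}=\alpha U_{ij}+\beta$ for all $i,j$, i.e. $V=\alpha U+\beta J_{K\times M}$; taking $c=-\alpha>0$, $e=1$, $d=0$, $f=\beta$ then yields $cU+dJ_{K\times M}=-eV+fJ_{K\times M}$. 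In the degenerate case where $\phi$ is constant (equivalently $U$ is a constant matrix), strict competitiveness forces $\psi$ constant as well, so both $U$ and $V$ are constant matrices and any $c,e>0$ trivially work.

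The heart of the argument, and the step I expect to be the main obstacle, is upgrading the merely \emph{monotone} relation $\psi=g\circ\phi$ to an \emph{affine} one. Here I would exploit bilinearity: fixing $\lambda$, the set $R_\lambda=\{(a^\tp U\lambda,\,a^\tp V\lambda):a\in\Delta([K])\}$ is the convex hull of $\{((U\lambda)_i,(V\lambda)_i):i\in[K]\}$, hence a convex polygon. But $R_\lambda$ lies on the graph of the strictly decreasing $g$, and a convex set contained in such a graph must be a single segment (any two of its points force $g$ to be affine between them, and no two-dimensional convex piece can sit on a monotone graph). Thus each $R_\lambda$ is a segment on which $g$ is affine; taking $\lambda=e_j$ shows each column $\{(U_{ij},V_{ij})\}_i$ is collinear, and the symmetric argument with $a$ fixed shows each row is collinear.

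It remains to glue these local affine pieces into one global line, the delicate part: distinct segments $R_\lambda$ could a priori carry different slopes, giving a kinked piecewise-affine $g$. I would rule this out by a connectedness/continuity argument: as $\lambda$ ranges over the connected simplex $\Delta([M])$, the segment $R_\lambda$ and its first-coordinate projection $[\min_i(U\lambda)_i,\max_i(U\lambda)_i]$ vary continuously and their union is the entire range of $\phi$; whenever two such projection intervals overlap in more than a point the corresponding affine functions must coincide, so chaining overlapping intervals along paths in $\Delta([M])$ forces a single slope $\alpha<0$ and intercept $\beta$ throughout. Combined with the reduction above this establishes $V=\alpha U+\beta J_{K\times M}$ and completes the converse. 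The only care needed is to choose the initial $\lambda$ so that $R_\lambda$ is non-degenerate (possible precisely because we are outside the trivial constant-payoff case) and to check the overlap never collapses to isolated points along the connecting path.
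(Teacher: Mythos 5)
First, for calibration: the paper offers no proof of this lemma at all --- it is imported verbatim from \cite{adler09compete}, and the converse direction is precisely the nontrivial content of that note, so your proposal is being measured against the literature rather than against an argument in the paper. Your forward direction is correct (the one-line computation using $a^\tp J_{K\times M}\lambda=1$), and your reduction of the converse to ``the strictly decreasing conjugating function $g$ is affine,'' via the observation that each $R_\lambda$ (and each row-set $R^a$) is a convex set lying on the graph of $g$ and hence a line segment, is sound. The genuine gap is the gluing step, which you set aside as ``the only care needed'': that care is the whole theorem, and the chaining argument you propose cannot supply it. If $g$ had a single kink at $x_0$, every $R_\lambda$ would lie entirely in the left affine piece or entirely in the right one; the sets $S=\{\lambda:R_\lambda\subseteq\text{left piece}\}$ and $T=\{\lambda:R_\lambda\subseteq\text{right piece}\}$ are closed and cover $\Delta([M])$, so connectedness of the simplex merely produces some $\lambda^*\in S\cap T$ with $R_{\lambda^*}$ equal to the kink point itself. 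Nothing here is contradictory: projection intervals on one side of $x_0$ chain among themselves, those on the other side likewise, and the two families meet only at the single point $x_0$. The ``overlap collapses to an isolated point'' scenario is therefore not a side case to be checked away; it is exactly what a kink looks like, and no continuity or path-connectedness argument in the single variable $\lambda$ (or symmetrically in $a$ alone) can exclude it.

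What kills the kink is a deviation mixing both players simultaneously, which your plan never invokes. Take realized points $p_1$ (from a profile $(a_1,\lambda_1)$) strictly left of the kink $k$ and $p_2$ (from $(a_2,\lambda_2)$) strictly right of it. The cross profiles give points $q_{21}$ (from $(a_2,\lambda_1)$) and $q_{12}$ (from $(a_1,\lambda_2)$) on the graph, and the four segments $[p_1,q_{21}]\subseteq R_{\lambda_1}$, $[q_{21},p_2]\subseteq R^{a_2}$, $[p_1,q_{12}]\subseteq R^{a_1}$, $[q_{12},p_2]\subseteq R_{\lambda_2}$ are sub-arcs of the monotone graph; this forces the portion of the graph between $p_1$ and $p_2$ to be piecewise affine with its unique kink located at both $q_{21}$ and $q_{12}$, i.e.\ $q_{12}=q_{21}=k$. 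Now evaluate the bilinear payoffs at the averaged profile $\bigl(\tfrac{a_1+a_2}{2},\tfrac{\lambda_1+\lambda_2}{2}\bigr)$: the resulting point is $\tfrac14\,(p_1+p_2+2k)$, which must again lie on the graph; but writing $p_1=k+u$ and $p_2=k+w$, membership of this point in either affine piece forces $u\parallel w$, i.e.\ $p_1,k,p_2$ collinear, contradicting that the kink is genuine. Running this over all pairs of realized points shows $g$ is affine between any two points of its graph, after which your reduction closes the proof. (A smaller oversight: your ``initial non-degenerate $R_\lambda$'' may fail to exist even in non-trivial games, e.g.\ when $U$ has identical rows but distinct columns; there one must switch to the row segments $R^a$, which then coincide and already exhaust the graph.) So the skeleton is right, but the central step must be replaced by a genuinely two-player argument of this kind, or by the pure-strategy exchange argument of \cite{adler09compete}.
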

\begin{theorem}[Tightness of the Upper Bound]
\label{thm:upp}
    Assuming that belief-dominant equilibrium exists for some strictly Bayesian-posterior competitive game, then $\text{PoT} =1$. 
\end{theorem}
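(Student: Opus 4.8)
The plan is to exploit the rigid algebraic structure that strict Bayesian-posterior competitiveness imposes on $(U,V)$ and show that both the overt and the covert problems collapse onto the \emph{same} scalar, built from the value of an auxiliary zero-sum game. First I would recast strict Bayesian-posterior competitiveness as strict competitiveness of the bilinear game whose two players choose $a\in\Delta(\mathcal{A})$ and $\lambda\in\Delta(\Omega)$ with payoffs $a^\tp V\lambda$ and $a^\tp U\lambda$, and invoke the characterization stated above to obtain scalars $c,e>0$ and $d,f$ with $cU+dJ_{K\times M}=-eV+fJ_{K\times M}$; equivalently $U=-\alpha V+\beta J_{K\times M}$ with $\alpha=e/c>0$ and $\beta=(f-d)/c$. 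Since $a^\tp J_{K\times M}\lambda=(a^\tp\1)(\1^\tp\lambda)=1$, this gives the key identity
\begin{equation}
a^\tp U\lambda=-\alpha\,a^\tp V\lambda+\beta,\qquad a\in\Delta(\mathcal{A}),\ \lambda\in\Delta(\Omega),\label{eq:pf-identity}
\end{equation}
so the sender's payoff on each belief-action pair is an orientation-reversing affine image of the receiver's. The value $\nu:=\min_{\lambda}\max_{a}a^\tp V\lambda=\max_{a}\min_{\lambda}a^\tp V\lambda$ of the zero-sum game with matrix $V$ will be the common currency.

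I would then evaluate $U^{OP}$ from \eqref{eq:spe-belief}. As $a_i$ best responds to $\lambda_i$, we have $a_i^\tp V\lambda_i=\max_a a^\tp V\lambda_i=:\varphi(\lambda_i)$, and \eqref{eq:pf-identity} rewrites the sender's objective as $\sum_i\gamma_i a_i^\tp U\lambda_i=\beta-\alpha\sum_i\gamma_i\varphi(\lambda_i)$. Being a pointwise maximum of linear maps, $\varphi$ is convex, so Bayesian plausibility $\sum_i\gamma_i\lambda_i=p$ and Jensen's inequality yield $\sum_i\gamma_i\varphi(\lambda_i)\ge\varphi(p)$, with equality attained by the no-disclosure structure $\lambda_i\equiv p$. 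Maximizing the sender's payoff therefore minimizes this sum, giving $U^{OP}=\beta-\alpha\,\varphi(p)=\beta-\alpha\max_a a^\tp Vp$.

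Next I would evaluate $U^{CS}$ from the belief-dominant constraints \eqref{eq:nash-constraint}. Applying \eqref{eq:pf-identity} to the belief-dominance inequality $a_i^\tp U\lambda_i\ge a_i^\tp U\lambda_i'$ flips it into $a_i^\tp V\lambda_i\le a_i^\tp V\lambda_i'$ for all $\lambda_i'$, i.e.\ $\lambda_i\in\argmin_\lambda a_i^\tp V\lambda$; combined with $a_i\in\argmax_a a^\tp V\lambda_i$, each $(a_i,\lambda_i)$ is a pair of mutual best responses in the zero-sum game $V$. By the minimax theorem every such pair is a saddle point, so $a_i^\tp V\lambda_i=\nu$ for every signal and, moreover, each $\lambda_i$ lies in the convex set $L^\star=\argmin_\lambda\max_a a^\tp V\lambda$ of optimal minimizer strategies. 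Hence $U^{CS}=\beta-\alpha\nu$, independent of the particular equilibrium selected. Finally, the assumed existence of a belief-dominant PBE supplies beliefs $\lambda_i\in L^\star$ with $\sum_i\gamma_i\lambda_i=p$, which forces $p\in L^\star$ and therefore $\varphi(p)=\max_a a^\tp Vp=\nu$. Substituting this into $U^{OP}$ gives $U^{OP}=\beta-\alpha\nu=U^{CS}$, whence $\text{PoT}=U^{CS}/U^{OP}=1$.

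The step I expect to be the main obstacle is the last link: certifying that the equilibrium beliefs are genuinely minimax-optimal strategies of the auxiliary game (not merely best responses to the realized $a_i$) and then using Bayesian plausibility to transport the prior $p$ into $L^\star$. This is exactly where the existence hypothesis does the work, since it guarantees a feasible convex decomposition of $p$ over optimal beliefs, and it is what identifies the no-disclosure value $\varphi(p)$ with the game value $\nu$. A secondary point worth stating explicitly is the saddle-point argument guaranteeing $a_i^\tp V\lambda_i=\nu$ for \emph{every} signal, which is what makes $U^{CS}$ well-defined independently of the chosen belief-dominant PBE. As a sanity check, one could instead note that $\nu\le\varphi(p)$ and $\alpha>0$ already give $U^{CS}\ge U^{OP}$, which combined with the previously established $\text{PoT}\le1$ forces equality without explicitly arguing $p\in L^\star$.
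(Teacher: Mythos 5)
Your proof is correct, and its skeleton is the same as the paper's: reduce strict Bayesian-posterior competitiveness to an (affine image of a) zero-sum game via the characterization lemma of \cite{adler09compete}, observe that belief dominance together with the receiver's best response makes each per-signal pair $(a_i,\lambda_i)$ a saddle point of that zero-sum game, and use Bayesian plausibility $\sum_i\gamma_i\lambda_i=p$ to transport the common saddle value to the prior. The differences are in execution, and they mostly work in your favor. On the overt side, the paper never evaluates $U^{OP}$: it only derives $U^{OP}\le e_l^\tp U p$ from pure-action deviations and closes with the generic inequality $U^{OP}\ge U^{CS}$ (i.e., $\text{PoT}\le 1$). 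You compute $U^{OP}$ exactly, since $a_i^\tp V\lambda_i=\varphi(\lambda_i)$ with $\varphi$ convex, so Jensen plus plausibility shows no disclosure is optimal and $U^{OP}=\beta-\alpha\varphi(p)$; this also recovers the classical fact that persuasion is worthless against a strictly opposed receiver. On the covert side, the paper uses interchangeability to write $U^{CS}=a_1^{*\tp}Up$ and then asserts this equals $\max_i(Up)_i$; as your argument makes clear, the relevant quantity is the game value (equivalently $\min_k(Up)_k$ once $p$ is certified to be an optimal strategy of the auxiliary game), so your version quietly avoids a sign/direction slip in the paper's final identity, while the paper's overall conclusion survives because its closing step only needs $U^{OP}\le U^{CS}$ together with $U^{OP}\ge U^{CS}$. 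Finally, your main route --- each $\lambda_i$ lies in the convex optimal set $L^\star$, hence $p\in L^\star$ and $\varphi(p)=\nu$ --- delivers $U^{OP}=U^{CS}$ without invoking $\text{PoT}\le 1$ at all; your ``sanity check'' is precisely the paper's argument. Your explicit handling of the normalization $U=-\alpha V+\beta J_{K\times M}$ is also more careful than the paper's bare ``it suffices to prove the zero-sum case.''
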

\begin{proof}
    It suffices to prove the zero-sum case $U=-V$. In covert signaling, the constraints \eqref{eq:nash-constraint} in PBE become saddle-point constraints: for any equilibrium solution $a_i^*, \lambda_i^*$, $a_i^{*\tp} U \lambda_i'\leq a_i^{*\tp} U \lambda_i^{*\tp}\leq a_i'^{\tp} U\lambda_i^*$. Due to the interchangeability of mixed saddle-point strategies and the uniqueness of the saddle point value, the optimal value is $U^{CS}=a_i^{*\tp} U \lambda_i^*=a_j^{*\tp} U \lambda_j^*=a_i^{*\tp} U \lambda_j^*$ for any $i,j\in [N]$.  Hence, $U^{CS}=\sum_{i}a_i^{*\tp} U \lambda_i^{*} \gamma_i^*=\sum_{i} a_1^{*\tp} U \lambda_i^{*} \gamma_i^*=a_1^*U p=\max_i (Up)_i$.

In overt persuasion, the constraint in SPE \eqref{eq:spe-belief} turns into $a_i^\tp  U \lambda_i \leq a_i'^\tp U \lambda_i, \forall a_i'$, implying that for any fixed $l\in [K]$, $a_i^\tp  U \lambda_i \leq e_{l}^\tp U \lambda_i$, for $i\in [N]$ . Hence, for any $\gamma_i, 
    \lambda_i$, we have $\sum_{i}a_i^\tp  U \lambda_i \gamma_i \leq \sum_{i}e_l^\tp U \lambda_i\gamma_i=e_l^\tp U p\leq \max_i (Up)_i$, showing that $U^{OP}\leq U^{CS}$. Since by default $U^{OP}\geq U^{CS}$ (fewer constraints), then the two are equal.       
\end{proof}
\begin{remark}\label{remark:uninformative}
Even though the existence assumption is imposed in \Cref{thm:upp}, this is in fact dispensable, as one can construct a strictly competitive game that admits a belief-dominant equilibrium. The construction is inspired by the proof above, where we show that the equilibrium pair $a_i, \lambda_i$ constitutes a saddle point of $U$. For given utility matrices $U=-V$, denote by $(a,\lambda)$ an arbitrary saddle point of $U$. Then, let $A=\1 a^\tp $, $\Lambda=\lambda\1^\tp$, $\gamma=\frac{1}{N}\1$. Finally, define the prior as $p=\lambda$. By construction, $(\Lambda, A, \gamma)$ is a belief-dominant PBE to the communication game $\langle U, V, p\rangle$. The interpretation is that in this strictly competitive game, the receiver believes that every signal bears the sender's strategic intention that is in conflict with its own. Consequently, the receiver would ignore every signal from the sender: the posterior belief $\lambda$ equals the prior. Aware that its signal would be ignored anyway, the sender implements an uninformative information structure $\Pi=\frac{1}{N}\1\1^\tp$: a uniform distribution over the signal space independent of the state. 
\end{remark}

\section{Quadratic Games and Numerical Examples}\label{sec:quad}
 \eqref{eq:z} and \eqref{eq:bb} give a TSB characterization of PBE, which is instrumental in showing the tightness of the upper bound in finite games.  However, the bilinear programming does not reveal the tightness of the lower bound. This section presents a case study on the PoT in a particular continuous game: quadratic communication game (QCG). The PoT in QCG can be arbitrarily close to zero, implying the tightness of the lower bound.   

QCG consists of continuous state, signal, and action spaces: $\Omega=\mathcal{S}=\mathcal{A}=[0, 1]$, as well as quadratic utilities: $u(a, \omega)=-(a-\omega-b)^2, v(a,\omega)=-(a-\omega)^2$. The bias term $b>0$ denotes the misalignment between two players' interests: as $b\rightarrow 0$, the two are more aligned.  The receiver tries to guess where the actual state is (minimizing the error) based on the signal from the sender, who tries to mislead the receiver to somewhere else (specified by the offset $b$). The prior is the uniform distribution denoted by $p=\operatorname{unif}(0,1)$. 

\textbf{PBE in Signaling} An important finding in \cite{crawford82signaling} is that all PBE in QCG are partition equilibria. Given a constant $b>0$, there exists a positive integer $N(b)= \floor{-\frac{1}{2}+\frac{1}{2}(1+\frac{2}{b})^{1/2}}$ ($\floor{\cdot}$ is the ceiling function) such that there exists a PBE for every $N\in  [N(b)]$. The equilibrium information structure is in the form of partition signaling: for any $N\in [N(b)]$, there exists a sequence $0=k_0<k_1<\ldots<k_N=1$ such that 
\begin{equation}
    \pi(\cdot|\omega)=\operatorname{unif}(k_i,k_{i+1}),  \IF \omega\in (k_i, k_{i+1}).\label{eq:parti}
\end{equation}

In the partition equilibria, the sender randomly samples a signal from the sub-interval within which the true state falls, telling a half-truth to the receiver. One can clearly see from \eqref{eq:parti} that the more nearly players' interests coincide (the closer $b$ approaches zero), the finer partition there can be (the larger $N(b)$). On the contrary, as $b\rightarrow\infty$, $N(b)$ eventually falls to unity, and the sender would transmit uninformative signals to the receiver. Direct calculation shows that the watershed is $1/4$: as $b$ exceeds $1/4$, $\pi(\cdot|\omega)=p$ for all $\omega$. For the rest of this section, we assume $b\in (0, 1/4)$. 

We now turn to the sender's PBE payoff, i.e., $U^{CS}$. \cite[Theorem 1]{crawford82signaling} states that under the information structure \eqref{eq:parti}, $U^{CS}=-\sum_{i\in[N]} \Var_{\operatorname{unif}(k_{i-1},k_{i})}-b^2$, where $\Var_{\operatorname{unif}(k_i,k_{i+1})}$ denotes the variance of the uniform distribution over $[k_{i-1}, k_i]$. The equilibrium partition of number $N\in [N(b)]$, as shown in \cite{crawford82signaling}, is $k_i=i/N+2bi(i-N)$, $i\in [N]$. Hence, a direct calculation gives 
\begin{equation}
    U^{CS}=-\frac{1}{12N^2}-\frac{b^2(N^2-1)}{3}-b^2.\label{eq:ucs}
\end{equation}

\textbf{SPE in Persuasion}
The calculation of SPE in QCG rests on the backward induction in \eqref{eq:spe}. Given a posterior belief $\lambda$, the best response is $\hat{a}(\lambda)=\argmax_a \E_{\omega\sim \lambda}[-(a-\omega)^2]=\E_\lambda [\omega]$. The sender's expected payoff under $\lambda$ is $\hat{u}(\lambda)=\E_{\omega\sim\lambda} u(\hat{a}(\lambda), \omega)= -\Var_{\lambda}-b^2$, where $\Var_\lambda$ denotes the variance of the random variable $\omega\sim \lambda$. Finally, $U^{OP}$ is the optimal value of the following problem:
\begin{equation}
	\begin{aligned}
	\max_{\tau\in \Delta(\Delta([0,1]))} \quad & \E_\tau [-\Var_{\lambda}-b^2]\\
	\st & \int_{\lambda\in \operatorname{supp}(\tau)}\lambda \tau(d\lambda )=p.
\end{aligned} \label{eq:spe-uniform}
\end{equation}
 If we choose $\lambda$ as a Dirac measure $\delta(\omega)$, for $\omega\in [0,1]$, then the variance term is zero. Hence, the optimal value is $U^{OP}=-b^2$. The interpretation is that the sender opts for a truth-telling strategy, i.e., $s=\omega$, even though the incentive bias exists $b>0$. Consequently, the receiver's belief collapses to its true state $\lambda(\cdot|\omega)=\delta(\omega)$. Mathematically, this truth-telling equilibrium is due to the fact that $\hat{u}(\lambda)$ is convex in $\lambda$, and the reader is referred to \cite[Section V]{kamenica11BP} for more details, where authors consider a lobby game similar to our setting.      

\textbf{The Tight Lower Bound} With all the results above, we now address the lower bound of PoT. Note that for the simplicity of exposition, we construct non-positive utilities in QCG, violating the non-negativity assumption in \Cref{sec:model}. If  blindly computing $\frac{U^{CS}}{U^{OP}}$, one would arrive at the opposite conclusion. Therefore, we prove that PoT converges to zero by showing that $U^{OP}$ converges to zero (the maximum) at a higher order than $U^{CS}$ does, as $b\rightarrow 0$.  This higher-order convergence indicates that OP significantly outperforms CS.  
\begin{theorem}[Tightness of the Lower Bound]
\label{thm:lower}
Consider  the quadratic communication  game of the incentive bias $b>0$, PoT converges to $0$, as $b$ tends to $0$. 
\end{theorem}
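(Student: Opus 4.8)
The plan is to establish the orders at which the two payoffs approach their common supremum $0$ as $b\to 0$, and then compare. The overt-persuasion side is already in closed form: the truth-telling argument gives $U^{OP}=-b^2$, so $U^{OP}=\Theta(b^2)$ and approaches $0$ quadratically. It therefore remains only to pin down the rate at which $U^{CS}$ from \eqref{eq:ucs} approaches $0$, which hinges entirely on the growth of the partition size $N(b)$.

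First I would control $N(b)$ asymptotically. Writing $x(b)=-\tfrac12+\tfrac12(1+2/b)^{1/2}$ so that $N(b)=\lfloor x(b)\rfloor$, I note that $x(b)\to\infty$ and $x(b)\sim(2b)^{-1/2}$ as $b\to0$. Since the floor perturbs $x(b)$ by at most $1$ while $x(b)$ itself diverges, this yields $N(b)\sim(2b)^{-1/2}$ and hence $N(b)^2\sim\tfrac{1}{2b}$. Next I would substitute this estimate into \eqref{eq:ucs}: the two dominant terms each contribute at order $b$, namely $\tfrac{1}{12N^2}\sim\tfrac{b}{6}$ and $\tfrac{b^2(N^2-1)}{3}\sim\tfrac{b}{6}$, while the remaining terms are $O(b^2)$ and therefore negligible. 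Consequently $U^{CS}\sim-\tfrac{b}{3}$, i.e. $U^{CS}=\Theta(b)$.

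Comparing the two rates finishes the argument: $U^{OP}$ converges to the maximum $0$ at the strictly higher order $b^2$, whereas $U^{CS}$ does so only at order $b$; equivalently $U^{OP}/U^{CS}=\Theta(b)\to0$, which is the precise meaning of $\text{PoT}\to0$ under the non-positive utilities used in this game. To make the result as strong as possible I would evaluate $U^{CS}$ at the finest admissible partition $N=N(b)$, which a short derivative computation on \eqref{eq:ucs} identifies as essentially the sender-optimal choice $N^{*}=(2b)^{-1/2}$; so even the best covert equilibrium remains an order of magnitude worse than overt persuasion.

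The main obstacle is the interplay between the integer constraint on $N$ and the asymptotics: because $N(b)$ is a step function of $b$, $U^{CS}$ is not smooth, and I must verify that the floor's $O(1)$ jumps do not disturb the leading order $\Theta(b)$. This is handled by the two-sided bound $x(b)-1<N(b)\le x(b)$ together with $x(b)\to\infty$, confining $N(b)^2$ to $\tfrac{1}{2b}(1+o(1))$. The only other point requiring care is the sign bookkeeping flagged in the text: since both payoffs are non-positive, one must track the gap to the maximum $0$ rather than the raw quotient, lest the naive ratio $U^{CS}/U^{OP}$ diverge and suggest the opposite conclusion.
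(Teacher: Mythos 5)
Your proof is correct and follows essentially the same route as the paper's: select the Pareto-superior partition equilibrium $N=N(b)$, use $N(b)\sim b^{-1/2}$ to conclude $U^{CS}=\Theta(b)$ from \eqref{eq:ucs}, and compare against $U^{OP}=-b^2=\Theta(b^2)$ by orders of convergence to the maximum $0$ rather than via the raw quotient. Your write-up is in fact somewhat more careful than the paper's, since you explicitly control the floor function's $O(1)$ perturbation and extract the sharper estimate $U^{CS}\sim -b/3$.
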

\begin{proof}
    It suffices to prove that the convergence order of $U^{CS}$ is strictly less than 2. First, note that for PBE payoff in \eqref{eq:ucs} is non-unique, as every $N\in [N(b)]$ leads to a PBE. Therefore, we consider the PBE with the maximum payoff achieved when $N=N(b)$, and the resulting PBE is Pareto-superior to all other equilibria \cite{crawford82signaling}. Then, notice that $N=N(b)\sim O(b^{-1/2})$, we arrive at $N^{-2}\sim O(b)$ and $N^2b^2\sim O(b)$; i.e., $U^{CS}$ in \eqref{eq:ucs} is of $O(b)$. 
\end{proof}
\begin{remark}[Half-Truth still Hurts.]
One interpretation of \Cref{thm:lower} is that the opaque information disclosure, compared to the transparent, creates ``friction'' in information transmission. As one can see from the partition equilibria in \eqref{eq:parti}, the informativeness of the signaling is reflected by the width of each sub-interval $d_i=k_{i+1}-k_i$. The finer the partition is, the smaller $d_i$ is, and the more confident the receiver is about the true state. As $b\rightarrow 0$, and $d_i$ shrinks, the half-truth gets closer to the truth. Yet, the half-truth still hurts: the signal bears randomness (unlike the deterministic signal in OP), even though the two players' interests coincide. The resulting $U^{CS}$ exhibits a first-order convergence.               
\end{remark}

\section{Conclusion}
This work has introduced the notion of \emph{price of transparency} (PoT) to quantify the cost or benefit of information disclosure in strategic interactions. It allows for the assessment of the sender's tradeoffs when adhering to ethical standards that require transparency in information disclosure. We have observed that counterintuitively, choosing transparency can yield a payoff no less than that under an opaque information structure, with PoT values ranging between $0$ and $1$.  We have developed a two-stage-bilinear programming approach \eqref{eq:z} using Bayesian plausibility to solve for the perfect Bayesian equilibrium. Furthermore, this programming approach has enabled us to show the upper bound is attainable for strictly Bayesian-posterior competitive games. Additionally, we have constructed quadratic games where PoT can be arbitrarily close to $0$. The tight lower bound implies that the sender can be plagued by the lack of transparency. 
\bibliographystyle{ieeetr}
\bibliography{potref}
\end{document}